\let\sigproof\proof\let\proof\relax
\let\sigendproof\endproof\let\endproof\relax
\let\proof\sigproof
\let\endproof\sigendproof
\newtheoremstyle{sig}
  {}
  {}
  {\itshape}
  {}
  {\scshape}
  {.}
  {.5em}
  {#1 #2\thmnote{\quad(#3)}}
\numberwithin{equation}{section} 
\tikzset{
  symbol/.style={
    draw=none,
    every to/.append style={
      edge node={node [sloped, allow upside down, auto=false]{$#1$}}}
  }
}
\newtheorem{lem}{Lemma}[section]
\newtheorem{prop}[lem]{Proposition}
\newtheorem{theo}[lem]{Theorem}
\newtheorem{cor}[lem]{Corollary}
\theoremstyle{definition}
\newtheorem{defi}[lem]{Definition}
\theoremstyle{remark}
\newtheorem{exam}[lem]{Example}
\newtheorem{remark}[lem]{Remark}
\newcommand{\eproof}{\hfill\qed}
\def\mcG{\mathcal{G}}
\def\mcP{\mathcal{P}}
\def\mcZ{\mathcal{Z}}
\def\bbE{\mathbb{E}}
\def\bbR{\mathbb{R}}
\def\bbZ{\mathbb{Z}}
\def\bbN{\mathbb{N}}
\def\bbC{\mathbb{C}}
\def\bbP{\mathbb{P}}
\def\bbD{\mathbb{D}}
\def\fkf{\mathfrak{f}}
\def\fkc{\mathfrak{c}}
\def\fkx{\mathfrak{x}}
\def\fky{\mathfrak{y}}
\def\fkz{\mathfrak{z}}
\def\sgn{\mathsf{sgn}}
\def\NN{\mathbb{N}\xspace}
\def\Pd{\mcP_{d}}
\def\Oh{\mathcal{O}}
\DeclareMathOperator{\cond}{\texttt{C}}
\DeclareMathOperator{\condR}{\texttt{C}_{\bbR}}
\def\uR{^{\mathbb{R}}}
\def\uC{^{\mathbb{C}}}
\def\enumber{\mathrm{e}}
\def\dist{\mathrm{dist}}
\DeclarePairedDelimiter\ceil{\lceil}{\rceil}
\DeclarePairedDelimiter\abs{\lvert}{\rvert}
\DeclarePairedDelimiter\norm{\lVert}{\rVert}
\newcommand{\OO}{\ensuremath{\mathcal{O}}\xspace}
\newcommand{\sO}{\ensuremath{\widetilde{\mathcal{O}}}\xspace}
\newcommand{\OB}{\ensuremath{\mathcal{O}_B}\xspace}
\newcommand{\sOB}{\ensuremath{\widetilde{\mathcal{O}}_B}\xspace}
\newcommand{\sturm}{\textsc{sturm}\xspace}
\newcommand{\descartes}{\textsc{descartes}\xspace}
\newcommand{\Descartes}{\textsc{descartes}\xspace}
\newcommand{\uObD}{{\overline{\mathcal{D}}}\xspace}
\newcommand{\dObD}{{\underline{\mathcal{D}}}\xspace}
\newcommand{\ObL}{\mathcal{L}\xspace}
\newcommand{\ObA}{\mathcal{A}\xspace}
\newcommand{\wid}{\mathtt{wid}\xspace}
\newcommand{\xmid}{\mathtt{mid}\xspace}
\newcommand{\img}{\mathtt{i}\xspace}
\newcommand{\var}{\textsc{var}\xspace}
\newcommand{\func}[1]{\textsc{#1}\xspace}
\let\original@algocf@latexcaption\algocf@latexcaption
\long\def\algocf@latexcaption#1[#2]{%
  \@ifundefined{NR@gettitle}{%
    \def\@currentlabelname{#2}%
  }{%
    \NR@gettitle{#2}%
  }%
  \original@algocf@latexcaption{#1}[{#2}]%
}
\renewcommand\footnotetextcopyrightpermission[1]{} 
\title[Beyond Worst-Case Analysis for Root Isolation Algorithms]{Beyond Worst-Case Analysis for Root Isolation Algorithms}
\begin{document}

\sloppy

\author{Alperen Ergur}
\email{alperen.ergur@utsa.edu}
\affiliation{%
  \institution{The Univ. of Texas at San Antonio}
  \streetaddress{}
  \city{}
  \country{TX, USA}
}

\author{Josu\'e Tonelli-Cueto}
\email{josue.tonelli.cueto@bizkaia.eu}
\affiliation{%
  \institution{Inria Paris \& IMJ-PRG}
  \streetaddress{}
  \city{}
  \country{France}
}

\author{Elias Tsigaridas}
\email{elias.tsigaridas@inria.fr}
\affiliation{%
  \institution{Inria Paris and Sorbonne Universit\'e}
  \streetaddress{}
  \city{}
  \country{France}
}

\begin{abstract}
Isolating the real roots of univariate polynomials is a fundamental problem in
symbolic computation and it is arguably one of the most important problems in
computational mathematics. The problem has a long history decorated with
numerous ingenious algorithms and furnishes an active area of research.
However, the worst-case analysis of root-finding algorithms does not correlate
with their practical performance. We develop a smoothed analysis framework for
polynomials with integer coefficients to bridge the gap between the complexity
estimates and the practical performance. 
In this setting, we derive that the expected bit complexity of 
\descartes solver to isolate the real roots of 
a polynomial, with coefficients uniformly distributed, is 
$\sOB(d^2 + d \tau)$,
where $d$ is the degree of the polynomial and $\tau$ the bitsize 
of the coefficients.
\end{abstract}




\keywords{
univariate polynomials,
root-finding,
Descartes solver,
condition-based complexity,
average complexity,
beyond worst-case analysis
}

\maketitle

\section{Introduction}
The interactions between the ways we design and the ways we analyze algorithms
are transformative on both ends: Unreasonably effective algorithms transform
our complexity analysis frameworks, where else the discovery of essential
complexity parameters transforms the ways we design algorithms. In numerical
computation, the use of the condition numbers illustrate emphatically this
phenomenon: condition numbers are  a way of explaining the success of certain
numerical algorithms\footnote{According to Wilkinson~\cite{wilkinson1971},
Turing~\cite{turing1948} introduced condition numbers to explain the practical
success of Gaussian elimination despite the existing worst-case analyses.}, a
guiding complexity parameter for the design of new algorithms, and a foundation
for average and smoothed analysis of numerical
algorithms~\cite{bcssbook,conditionbook}. In discrete computation, this
two-sided interaction between complexity analysis frameworks and algorithms'
design forms a dynamic and exciting area of current research
\cite{roughgarden2021book,downeyfellows2013} with a rich history rooted at the
beginnings of complexity theory~\cite[Ch.~18]{arorabarak2009}. Inspired by
these developments, we aim to take a first step for bringing different
modalities of algorithmic analysis into symbolic computation. To the best of
our knowledge, this large field
almost entirely relies on the worst-case analysis.

We consider one of the most basic problems in symbolic computation: computing
the roots of univariate polynomials. This is a singularly important problem with
applications in the whole range of computer science and engineering. It is
extensively studied from theoretical and practical perspectives for decades and
keeps attracting plenty of attention
\cite{McP-book-13,Pan-survey-21,Pan-survey-97,ept-crc-2012}.
We focus on the \emph{real root isolation} problem: to compute intervals with
rational endpoints that contain only one real root of the polynomial and each
real root is contained in an interval. Besides its countless direct
applications, this problem is omnipresent in symbolic computation; among its
numerous uses it stands out as a crucial subroutine for elimination based
multivariate polynomial systems solvers, e.g.,~\cite{ept-crc-2012}.

Despite the ubiquity of real root isolation in engineering and its relatively
long history in theoretical computer science, the state-of-the-art complexity
analysis falls short of providing guidance for practical computations. Pan's
algorithm~\cite{Pan02jsc} has the best worst-case complexity since nearly two
decades and is colloquially referred to as the ``optimal'' algorithm. However,
Pan's algorithm is rather sophisticated and has only a prototype implementation
in PARI/GP \cite{PARI2}.
In contrast, other algorithms with inferior worst-case complexity estimates
have excellent practical performance, e.g.,~
\cite{krs-for-real-16,htzekm-solve-09,et-tcs-2007}. In our view, this lasting
discrepancy between theoretical complexity analyses and practical performance
is related to the insistence on using the worst-case framework in the symbolic
computation community.
However, let us mention the exceptions of \cite{egt-issac-2010}, 
that provides estimates for the expected complexity of \textsc{sturm} algorithm for real solving,
\cite{et-tcs-2007}, that provides (conditional) expected case bounds for 
the continued fraction algorthm, 
and \cite{PanTsi-refine-2013} that considers the expected number
steps for the problem of real root refinement.

We demonstrate how average/smoothed analysis frameworks can help to
 predict the practical performance of symbolic real root isolation
  algorithms. In particular, we show that in our data model the
\descartes solver has a bit complexity quasi-linear in the input size, 
when we consider the dense representation.
This provides an explanation for the excellent practical performance of \descartes
that even
outperforms its numerical alternatives. See \S\ref{subsec:results1} for a simple
statement and \S\ref{subsec:results2} for the full technical statement.



\subsection{Synopsis of real root isolation algorithms}
\label{sec:synopsis}
We can (roughly) characterize the various algorithms for (real) root isolation
as numerical or symbolic algorithms;  the recent years there are also efforts
to combine the best of the two worlds.

The numerical algorithms are, in almost all the cases, iterative algorithms that
approximate all the roots (real and complex) of a polynomial up to any desired
precision. Their main common tool is (a variant of) a Newton operator. The
algorithm with the best worst-case complexity  due to Pan~\cite{Pan02jsc}  is
based on Sch\"onhage's splitting circle divide-and-conquer technique
\cite{Sch82}. It recursively factors the polynomial until we obtain linear
factors that approximate, up to any desired precision, all the roots of the
polynomial and it has nearly optimal arithmetic complexity. We can turn this
algorithm, and also any other numerical algorithm, to an exact one, by
approximating the roots up to the separation bound; that is
the minimum distance between the roots. In this way Pan obtained the record
worst case bit complexity bound $\sOB(d^2\tau)$ for a degree $d$ polynomial
with maximum coefficient bitsize $\tau$ \cite{Pan02jsc}; see also
\cite{Kirrinis-solve-98,msw-apan-15,becker2018near}.
Besides the algorithms already mentioned, there are also several seemingly
practically efficient numerical algorithms, e.g., \textsc{mpsolve}
\cite{mpsolve-theory} and \textsf{eigensolve} \cite{eigensolve}, that lack
convergence guarantees and/or precise bit complexity bounds.

Regarding symbolic algorithms, the majority is subdivision-based. These
algorithms mimic binary search. Given an initial interval that contains all (or
some) of the real roots, they repeatedly subdivide it until we obtain intervals
containing zero or one real root. 
Prominent representatives of this approach are \sturm and \descartes. \sturm
depends on Sturm sequences to count \emph{exactly} the number of distinct roots
in an interval, even when the polynomial is not square-free. Its complexity is
$\sOB(d^4 \tau^2)$ \cite{Dav:TR:85,Yap:SturmBound:05} and it is not so
efficient in practice; the
bottleneck seems to be the high cost of  computing the Sturm sequence.
\descartes is
based on Descartes' rule of signs to bound the number of real roots of a
polynomial in an interval. Its worst case complexity is
$\sOB(d^4 \tau^2)$ \cite{ESY:descartes}. Even though its worst case bound is
similar to \sturm, the \descartes solver has excellent practical performance
and it can routinely solve polynomials of degree several thousands
\cite{RouZim:solve:03,JohKraLynRicRus-issac-06,t-slv-16,htzekm-solve-09}.
There are also other algorithms based on the continued fraction expansion of
the real numbers \cite{sharma-tcs-2008,et-tcs-2007}
and on point-wise evaluation \cite{BurrKrahmer-eval-12,sy-simple-11}.

Let us also mention the bitstream version of \descartes
\cite{ekkmsw-bitstream-05},
where we assume that there is an oracle that for each coefficient of the
polynomial returns an approximation to any absolute error.
This approach, by also incorporating several tools from numerical algorithms,
leads to improved variants of \descartes \cite{sm-anewdsc}.
In the end, this variant yields the record worst case complexity bounds and
efficient implementation~\cite{krs-for-real-16} especially
when there are clusters of roots.
Even more, there is also a subdivision algorithm  \cite{becker2018near}
that applies several improvements to the modified Weyl algorithm by Pan
\cite{Pan-Weyl-00}
and achieves the (record) complexity bound $\sOB(d^3 + d^2 \tau)$.

\subsection{Warm-up: A simple form of the main result}\label{subsec:results1}
The main complexity parameters for univariate polynomials with integer (or
rational) coefficients is the degree $d$ and the bitsize $\tau$; the latter
refers to the maximum bitsize of the coefficients. We aim for a data model that
resembles a ``typical'' polynomial with exact coefficients. The first natural
candidate is the following: fix a bitsize $\tau$, let
$\fkc_0,\fkc_1,\ldots,\fkc_d$
be independent copies of the uniformly distributed integer in
$[-2^{\tau},2^{\tau}] \cap \mathbb{Z}$, and let
$
  \fkf = \sum\nolimits_{i=0}^d \fkc_i X^{i}
$
which we call the \emph{uniform random bit polynomial with bitsize $\tau$}. 
Recall that $\OO$, resp. $\OB$, denote the arithmetic, resp.  bit,
complexity and that we use $\sO$, resp. $\sOB$, to ignore (poly-)logarithmic
factors of $d$. For uniform random bit polynomials, our result has the
following form.
\begin{theo}\label{theo:main1}
For a degree $d$ uniform random bit polynomial $\fkf$ with bit size
$\tau(\fkf)$,  \descartes solver isolates the real roots of $\fkf$ in  expected
time $\sOB(d \, \tau+d^2)$.
\end{theo}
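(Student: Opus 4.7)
The plan is to prove Theorem~\ref{theo:main1} by decoupling the analysis into (i) a deterministic condition-based bit complexity bound for \descartes and (ii) a probabilistic small-ball estimate for the condition number when $\fkf$ is a uniform random bit polynomial. The guiding idea is that the state-of-the-art worst-case bound $\sOB(d^3 + d^2\tau)$ of \cite{becker2018near} (and related \descartes analyses) is driven by essentially a single geometric parameter—namely, a root-separation condition number $\kappa(\fkf)$—and it is only this quantity that one needs to control in expectation.

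For the deterministic step, I would fix $\kappa(\fkf)$ to measure the inverse distance in coefficient space from $\fkf$ to the discriminant variety (equivalently, a normalized minimum complex-root separation, suitably amortized over all real roots). Using an amortized analysis of the \descartes subdivision tree together with standard Taylor-shift cost estimates in the spirit of \cite{ESY:descartes,sm-anewdsc,becker2018near}, one can isolate a worst-case bound of the form
$$\sOB\bigl((d^2 + d\tau)\,(1 + \log \kappa(\fkf))\bigr),$$
where the factor $d^2 + d\tau$ absorbs per-level Taylor-shift work and input preprocessing, while $\log \kappa(\fkf)$ bounds both the depth of the subdivision tree and the extra precision needed at its nodes.

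For the probabilistic step, I would show that $\bbE[\log \kappa(\fkf)] = \sO(1)$ for the uniform random bit polynomial. The main tool is an anticoncentration (Littlewood--Offord-type) inequality for integer-coefficient polynomials: for a fixed $z \in \bbC$, the value $\fkf(z)$ is a sum of independent integer-uniform variables weighted by the nonzero numbers $z^i$, hence a small-ball bound of the form $\Prob(|\fkf(z)| \le \eps \cdot \|\fkf\|_z) \lesssim \mathrm{poly}(d)\,\eps$ holds at a suitable reference scale $\|\fkf\|_z$. Combined with a covering-net argument over candidate near-multiple-root configurations—or, more cleanly, a reformulation of $\kappa(\fkf)$ via the discriminant and a resultant with $\fkf'$—this yields a polynomial tail $\Prob(\kappa(\fkf) \ge t) \le \mathrm{poly}(d)/t$, which integrates to $\bbE[\log \kappa(\fkf)] = O(\log d)$, a quantity absorbed by the $\sOB$ notation. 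Linearity of expectation then assembles the two ingredients into $\bbE[\text{bit cost}] = \sOB(d^2 + d\tau)$.

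I expect the main obstacle to be the anticoncentration step. The coefficients are discrete uniform on $[-2^\tau,2^\tau]\cap\bbZ$, so continuous small-ball inequalities do not apply directly, and one must be careful that the discretization does not cost a factor of $2^{-\tau}$ in the tail (which would ruin the bound). Furthermore, $\kappa$ involves a minimum over many pairs of roots, i.e.\ the distance to a codimension-one variety of bad polynomials, and ensuring that the tail does not pick up extra polynomial factors in $d$—beyond the mild $\mathrm{poly}(d)/t$ that integrates to $O(\log d)$—is the delicate technical heart of the argument.
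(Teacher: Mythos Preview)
Your decomposition into a deterministic condition-based bound plus a probabilistic tail estimate is indeed the right architecture, and your anticipation of the discrete anticoncentration difficulty is on point. However, there is a genuine gap in step (i): the claimed instance-based bound $\sOB\bigl((d^2+d\tau)(1+\log\kappa(\fkf))\bigr)$ does not follow from the cited \descartes analyses, and in fact cannot hold with a single separation-type condition number.

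The issue is that $\log\kappa$ controls only the \emph{depth} of the subdivision tree (via Theorems~\ref{theo:condbasedseparation} and~\ref{thm:sepdepthbound}). The \emph{width} of the tree at each level is governed by Descartes' sign-variation counts, which by Obreshkoff's theorem (Theorem~\ref{thm:Obr-signs}) are bounded by the number of \emph{complex} roots in the associated Obreshkoff areas---and this number is not controlled by root separation at all. A polynomial can have arbitrarily many well-separated complex roots clustered near the real axis, forcing the tree to be wide even when $\kappa$ is small. With width bounded only by the trivial $O(d)$, your argument yields at best $\sOB(d^2\tau\log\kappa + d^3\log^2\kappa)$, which is off by a factor of $d$.

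The paper closes this gap by introducing a second random quantity $\varrho(\fkf)$, the number of complex roots in a carefully chosen neighbourhood $\Omega_d$ of $[-1,1]$ (Section~\ref{sec:complexroots}), and proving separately that $\bbE\,\varrho(\fkf)^\ell$ is polylogarithmic (Corollary~\ref{cor:probboundroots}) via a Titchmarsh-type root-counting bound combined with anticoncentration of the point values $\fkf(\xi_n)$. The instance-based bound then reads $\sOB(d\tau\,\varrho^2\lg\condR + d^2\,\varrho^2\lg^2\condR)$ (Theorem~\ref{thm:Descartes-steps}), and Cauchy--Schwarz on the product $\varrho^2\cdot\lg^k\condR$ assembles the two expectations. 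Your proposal is missing exactly this width-control ingredient; once you add it, the rest of your outline aligns with the paper's proof.
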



Notice that the expected time complexity of \descartes solver in this simple
model is better by a factor of $d$ than the record worst-case complexity bound
of Pan's algorithm.

\subsection{Statement of main results in full detail}\label{subsec:results2}

We develop a general model of randomness that provides a smoothed analysis
framework on polynomials with integer coefficients.
\begin{defi}
Let $d\in\bbN$. A \emph{random bit polynomial with degree $d$} is a random
polynomial
$
\fkf:=\sum\nolimits_{i=0}^d \fkc_i X^{i} ,
$
where the $\fkc_i$ are independent discrete random variables with values in
$\bbZ$.
Then,
\begin{enumerate}
  \item the \emph{bitsize of $\fkf$}, $\tau(\fkf)$, is the minimum integer
    $\tau$ so that for all $i \in \{ 0,1,2,\ldots,d \}$,
    $
    \bbP(| \fkc_i| \leq 2^\tau)=1.
    $
  \item the \emph{weight of $\fkf$}, $w(\fkf)$, is the maximum probability that
    $\fkc_0$, $\fkc_1$, $\fkc_{d-1}$, and $\fkc_d$ can take a value, i.e.,
    \[
    w(\fkf):=\max\{\bbP(\fkc_i=k) \mid  i\in\{0,1,d-1,d\},\,k\in\bbR\}.
    \]
\end{enumerate}
\end{defi}
\vspace{0.05in}

\begin{remark}
Note that we only impose restrictions on the size of the probabilities of the
coefficients of $1$, $X$, $X^{d-1}$ and $X^d$. This might look odd at the first
sight. We set our randomness model this way to be able to consider the most
flexible data-model that can be handled by our proof techniques. We provide
examples below to justify this technical assumption.

\end{remark}

\begin{exam} \label{ex:uniform}
The uniform random bit polynomial of bitsize $\tau$ we introduced  is the main
example of a random bit polynomial $\fkf$. Note that in this case we have
$w(\fkf) = \frac{1}{1+2^{\tau+1}}$ and $\tau(\fkf)=\tau$.
\end{exam}

As we will see in the examples below, our randomness model is very flexible.
However, this flexibility comes at a cost. In
principle, we could have $w(\fkf)=1$ which would make our randomness model
equivalent to the worst-case model. To control the effect of large $w(\fkf)$ we
introduce the following quantity, which measures how far we are from a uniform
random bit polynomial.

\begin{defi}
  \label{def:uniformity}
  The \emph{uniformity} of a random bit polynomial $\fkf$ is
  \[
    u(\fkf):=\ln \left( w(\fkf) (1+2^{\tau(\fkf)+1}) \right) .
  \]
\end{defi}
\begin{remark}
Note that $u(\fkf)=0$ if and only if the coefficients of $1$, $X$, $X^{d-1}$
and $X^d$ in $\fkf$ are uniformly distributed in $[-2^{\tau}, 2^{\tau}] \cap
\mathbb{Z}$.
\end{remark}

The following three examples illustrate the flexibility of our random model by
specifying the support, the sign of the coefficients, and their exact bitsize.
Although we specify them separately, any combination of specifications is also
possible.

\begin{exam}[Support]\label{ex:spec1}
Let $A\subseteq \{0,1,\ldots,d-1,d\}$ with $0,1,d-1,d\in A$. Then
$\fkf:=\sum_{i\in A}\fkc_iX^i$,
where the $\fkc_i$'s are independent and uniformly distributed in
$[-2^{\tau},2^\tau]$ is a random bit polynomial with $u(\fkf)=0$ and
$\tau(\fkf)=\tau$.
\end{exam}

\begin{exam}[Sign of the coefficients]\label{ex:spec2}
Let $s\in\{-1,+1\}^{d+1}$. The random polynomial $\fkf:=\sum_{i=0}^d\fkc_iX^i$,
where the  $\fkc_i$'s are independent and uniformly distributed in
$s_i([1,2^{\tau}]\cap\bbN)$, is a random bit polynomial with $u(\fkf)\leq
\ln(3)$ and $\tau(\fkf)=\tau$.
\end{exam}

\begin{exam}[Exact bitsize]\label{ex:spec3}
Let $\fkf:=\sum_{i=0}^d\fkc_iX^i$ be the random polynomial, where the
$\fkc_i$'s are independent random integers of exact bitsize $\tau$, i.e.,
$\fkc_i$ is uniformly distributed in
$\bbZ\cap([-2^{\tau}+1,-2^{\tau-1}]\cup[2^{\tau-1},2^{\tau}-1])$. Then $\fkf$
is a random bit polynomial with $u(\fkf)\leq \ln(3)$ and $\tau(\fkf)=\tau$.
\end{exam}


We consider a \emph{smoothed random model} for polynomials, where a
deterministic polynomial is perturbed by a random one. In this way our
random bit polynomial model includes smoothed analysis over integer
coefficients as a special case.

\begin{exam}[Smoothed analysis]\label{ex:smoothed}
Let $f\in \Pd$ be a fixed integer polynomial with coefficients in
$[-2^\tau,2^\tau]$, $\sigma\in\bbZ\setminus \{0\}$ and $\fkf\in\Pd$ a random
bit polynomial. Then
$
\fkf_\sigma:=f+\sigma\fkf 
$
is a random bit-polynomial with bitsize
$
\tau(\fkf_\sigma)\leq \max\{\tau,\tau(\fkf)+\tau(\sigma)\}+1,
$
where $\tau(a)$ denotes the bitsize of $a$, and uniformity
$
u(\fkf_\sigma)\leq 1+\max\{\tau-\tau(\fkf),\tau(\sigma)\}+u(\fkf).
$
By combining the smoothed random model with the previous examples, we can
obtain structured perturbations.
\end{exam}

Our main result is the following:
\begin{theo}\label{theo:main2}
  Let $\fkf$ be random bit polynomial, of degree $d$, bitsize $\tau(\fkf)$, and
  uniformity parameter $u(\fkf)$, such that
$\tau(\fkf) = \Omega( \lg{d}+ u(\fkf))$, then  \descartes solver isolates the
real roots of $\fkf$ in  expected time $\sOB(d \, \tau \, (1 + u(\fkf))^3
+d^2\, (1 + u(\fkf))^4)$.
\end{theo}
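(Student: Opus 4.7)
The plan is to carry out a condition-based smoothed analysis adapted to the bit-complexity model of \descartes. I would start by decomposing the running time of \descartes on a deterministic input $f$ into two factors: the size of the subdivision tree produced by the algorithm, and the per-node bit-cost of the Taylor shift at a dyadic point followed by a Descartes sign-variation count. Both factors can be controlled in terms of a single condition-type quantity $\log \kappa(f)$ that aggregates the root-separation bound $\log(1/\mathrm{sep}(f))$ together with terms of the form $\log(\|f\|/|f'(\alpha)|)$ summed over the real roots $\alpha$ of $f$. The deterministic goal is to show a tree-size bound of the shape $\tilde{\mcO}(d + \log \kappa(f))$ and a per-node bit cost of $\tilde{\mcO}(d(\tau + \log \kappa(f)))$; both follow from a careful re-examination of the continuous-amortization analysis for \descartes already present in the literature.

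The second step is the probabilistic core: bounding $\bbE[\log \kappa(\fkf)]$ when $\fkf$ is a random bit polynomial. The key tool is anti-concentration. For a fixed $\alpha\in\bbR$, the value $\fkf(\alpha)=\sum_{i=0}^d\fkc_i\alpha^i$ is a sum of independent discrete random variables, so a Littlewood--Offord / Erd\H{o}s-style small-ball estimate bounds $\bbP(|\fkf(\alpha)|\leq t)$ in terms of $t$ and the weight $w(\fkf)$. The uniformity assumption on the four coefficients $\fkc_0,\fkc_1,\fkc_{d-1},\fkc_d$ exists precisely to guarantee that $\fkf(\alpha)$ is spread out in each of three regimes: $\alpha$ near $0$, where $\fkc_0,\fkc_1$ dominate; $\alpha$ inside a bounded annulus, where the bulk dominates and spreading is cheap; and $\alpha$ near $\infty$, where $\fkc_{d-1},\fkc_d$ dominate. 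The analogous estimate for $\fkf'(\alpha)$ then handles the $|f'(\alpha)|$ contributions to $\log\kappa$.

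Pointwise anti-concentration is next integrated into a bound on $\bbE[\log\kappa(\fkf)]$ by covering the relevant part of the real line with a geometric net whose spacing is tuned to the precision we need, and applying a union bound over the net. This should yield $\bbE[\log\kappa(\fkf)]=\tilde{\mcO}((1+u(\fkf))^2)$, up to logarithmic factors in $d$ and $\tau$, which combined with the deterministic step gives the bit-complexity $\sOB(d\tau(1+u(\fkf))^3+d^2(1+u(\fkf))^4)$ stated in the theorem. The hypothesis $\tau(\fkf)=\Omega(\lg d+u(\fkf))$ is exactly what is needed so that the minimum atom size of each $\fkc_i$ is small enough to beat the overhead from the finite net and from the degree.

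The hard part, and the one that dictates the structure of the whole argument, is the probabilistic bound on $\log\kappa(\fkf)$. Three technical difficulties have to be handled simultaneously: the real roots of $\fkf$ are themselves random, so one cannot directly apply anti-concentration at a single deterministic $\alpha$ and must instead combine small-ball estimates with a covering argument; the coefficients are discrete, so continuous density bounds are unavailable and must be replaced throughout by small-ball estimates calibrated to $w(\fkf)$; and the uniformity parameter $u(\fkf)$ enters at every scale of the geometric net, which is why the factors $(1+u(\fkf))^k$ accumulate in the final bound. Ensuring that the leading and trailing coefficients of $f$ retain enough anti-concentration after a Taylor shift at an arbitrary dyadic point -- so that the per-node analysis can reuse these estimates locally on each subdivision interval -- is the most delicate technical point, and is precisely what motivates the otherwise-asymmetric restriction to $\fkc_0,\fkc_1,\fkc_{d-1},\fkc_d$ in the definition of $w(\fkf)$.
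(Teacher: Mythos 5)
Your general plan (instance-based deterministic bound $\to$ anti-concentration $\to$ covering/union bound $\to$ expectation) shares the spirit of the paper, and your use of small-ball estimates calibrated to $w(\fkf)$ and of a covering net for the probabilistic step is essentially what the paper does. But the deterministic heart of your argument contains a gap that, as written, would prevent you from reaching the claimed bound.

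You propose a tree-size bound of the shape $\tilde{\mcO}(d+\log\kappa(f))$, which is additively linear in $d$ no matter how well conditioned $f$ is. Multiplying this by the per-node cost $\tilde{\mcO}(d(\tau+\log\kappa(f)))$ gives a total of at least $\tilde{\mcO}(d^2\tau)$ even when $\log\kappa(\fkf)$ is polylogarithmic in expectation -- a factor roughly $d$ worse than the stated $\sOB(d\,\tau(1+u(\fkf))^3 + d^2(1+u(\fkf))^4)$. To get the theorem you need the size of the subdivision tree itself to be polylogarithmic (not linear) in $d$ with high probability, which your decomposition cannot deliver. The paper achieves this by a different structural split: the \emph{depth} is controlled by the condition number $\condR(f)$ via the separation bound (Theorems~\ref{theo:condbasedseparation} and~\ref{thm:sepdepthbound}), and the \emph{width} is controlled by the number of \emph{complex} roots $\varrho(f)$ lying in a union $\Omega_d$ of $\Theta(\lg d)$ disks covering $(-1,1)$, via Obreshkoff's theorem relating sign variations to roots in the Obreshkoff areas (Theorem~\ref{thm:Obr-signs}) and a geometric argument that fits each Obreshkoff area inside one of the disks. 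This gives the instance bound $\sO(\varrho(f)^2\lg\condR(f))$ for the number of steps (Theorem~\ref{thm:Descartes-steps}). Both $\varrho(\fkf)$ and $\lg\condR(\fkf)$ are shown to be polylogarithmic in expectation, which is what makes the whole tree polylog-sized.

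Two further points. First, you aggregate $\log(\|f\|/|f'(\alpha)|)$ only over the \emph{real} roots $\alpha$, but the width of the Descartes tree is governed by complex roots near the real axis, not just real ones; a hyperbolic-but-clustered polynomial with no real roots in an interval can still force many subdivisions. The Obreshkoff machinery exists precisely to capture this. Second, applying Littlewood--Offord to $\fkf(\alpha)$ and $\fkf'(\alpha)$ separately gives at best a $1/t$ tail for the local condition number, which does not survive the $\Oh(dt)$-point union bound used to pass to $\condR$. The paper instead applies a rank-two anti-concentration estimate (Proposition~\ref{prop:reallinearprojectiondiscreterandomvector}, a discrete adaptation of Rudelson--Vershynin) to the joint map $f\mapsto(f(x),f'(x)/d)$ restricted to the coordinates of $1,X,X^{d-1},X^d$; the determinant of that $2\times4$ matrix is bounded away from zero uniformly in $x\in[-1,1]$, yielding the needed $1/t^2$ tail. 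This -- not the issue of retaining anti-concentration under Taylor shifts -- is the reason the definitions of $w(\fkf)$ and $u(\fkf)$ single out exactly those four coefficients. (The roots outside $[-1,1]$ are handled by the inversion $x\mapsto1/x$, which swaps $\fkc_0\leftrightarrow\fkc_d$ and $\fkc_1\leftrightarrow\fkc_{d-1}$, so the same four coefficients suffice.)
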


\begin{remark}
Note that if $\fkf$ is not square-free, \descartes will compute its square-free
part and then proceed as usual. The probabilistic complexity estimate covers
this case.
\end{remark}


\begin{remark}
One might further optimize the probabilistic estimates present in Section
\ref{subsec:prob} by employing strong tools from Littlewood-Offord
theory~\cite{rudelsonvershynin2008}. However, the complexity analysis depends
on the random variables in a logarithmic scale and so further improvements on
probabilistic estimates will not make any essential improvement on our main
result. Therefore, we prefer to use more transparent proofs with slightly less
optimal dependency on the uniformity parameter $u(\fkf)$.
\end{remark}



\subsection{Overview of main ideas}

The important quantities in analyzing \descartes are the \emph{separation
bound} and the number of complex roots nearby the real axis.

The separation bound is the minimum distance between
the distinct (complex) roots of a polynomial.
\cite{emt-dmm-j-19}.
This quantity controls the depth of the subdivision tree of \descartes. To
estimate this quantity we use condition
numbers~\cite{bcssbook,dedieubook,conditionbook}, following
\cite{TCTcubeI-journal}. In short, we use condition numbers to obtain an
instance-based estimate for the depth of the subdivision tree of \descartes.
Even though the \descartes algorithm isolates the real roots, complex roots
near the real axis control the width of the subdivision tree. This fact follows
from the work of Obreshkoff~\cite{Obreshkoff-book}, see also
\cite{KM-newDesc-06}. To estimate the number of roots in the Obreshkoff areas
we use complex analytic techniques. In short, we bound the number of complex
roots in a certain region to obtain an instance-based
estimate for the width of the subdivision tree of \descartes.
Overall, by controlling both the depth---through the condition number---and the
width---through the number of complex roots---we estimate the size of the
subsdivision tree of \descartes and so its bit complexity.

Finally, we perform the expected/smoothed analysis of \descartes by performing
probabilistic analyses of the number of complex roots and the condition number.
Expected/smoothed analysis results in computational algebraic geometry are rare
and mostly restricted to continuous random variables, with few exceptions
\cite{castromontanapardosanmartin2002};
see also \cite{PanTsi-refine-2013,egt-issac-2010,et-tcs-2007}.
To the best of our knowledge,  we
present the first known result for the expected complexity of root finding for
random polynomials with integer coefficients. Our results rely on the strong
toolbox developed by Rudelson, Vershynin, and others in random matrix
theory~\cite{rudelsonvershynin2015,livshytspaourispivovarov2016}.

\paragraph*{Organization.}
We deal with condition numbers, separation bounds and their probabilistic
estimates in Section~\ref{sec:cond-sep-prob}, we deal with the estimates of the
number of complex roots in Section~\ref{sec:complexroots}, and we show how
these quantities control the complexity of \descartes obtaining the final
complexity estimate in Section~\ref{sec:Descartes}.

\paragraph*{Notation.}
We denote by $\OO$, resp. $\OB$, the arithmetic, resp.  bit,
complexity and we use $\sO$, resp. $\sOB$, to ignore (poly-)logarithmic factors
of $d$. We denote by $\Pd$ the space of univariate polynomials of degree at
most $d$ with real coefficients and by $\Pd^\bbZ$ the subset of integer
polynomial. If $f = \sum_{k=0}^df_k X^k \in \Pd^\bbZ$, then the bitsize of $f$
is the maximum bitsize of its
coefficients. The set of complex roots of $f$ is $\mcZ(f)$.
We denote by $\var(f)$ the number of sign changes in the
coefficient list.
The \emph{separation bound} of $f$, $\Delta(f)$ or $\Delta$ if $f$ is clear
from the context,
is the minimum distance between the roots of $f$, see
\cite{emt-dmm-j-19,ep-dmm-17,Dav:TR:85}.
We denote by $\bbD$ the unit disc in the complex plane, by $\bbD(x,r)$ the disk
$x+r\bbD$, and by $I$ the interval $[-1,1]$. For a real interval $J = (a, b)$,
we consider $\xmid(J): = \tfrac{a+b}{2}$ and $\wid(J):= b-a$. For a
$n \in \NN$, we use $[n]$ to signify the set $\{1, \dots, n\}$ and
$\mu(n) = \OB(n\lg n)$ for the complexity of multiplying two integers of
bitsize $n$, where $\lg$ is the logarithm with base 2.

\paragraph*{Acknowledgements.} 
J.T-C. is supported by a postdoctoral fellowship of the 2020
``Interaction'' program of the Fondation Sciences Mathématiques de Paris. He is
grateful to Evgenia Lagoda for moral support and Gato Suchen for useful
suggestions regarding
Proposition~\ref{prop:reallinearprojectiondiscreterandomvector}.
A.E. is supported by NSF CCF 2110075, J.T-C. and E.T. 
are partially
supported by ANR
JCJC GALOP (ANR-17-CE40-0009).

\section{Condition numbers, separation bounds, and randomness}
\label{sec:cond-sep-prob}

We use various condition numbers for univariate polynomials
from~\cite{TCTcubeI-journal}, cf.~\cite{TCTcubeI}, to control the separation
bound of random polynomials. However, our
probabilistic analysis differs from~\cite{TCTcubeI-journal}
because we consider discrete random coefficients. 

\subsection[Condition numbers for univariate polynomials]{Condition numbers for
univ. polynomials}

The \emph{local condition number of $f\in\Pd$ at
  $z\in \bbD$}~\cite{TCTcubeI-journal} is
\begin{equation}
	\label{eq:cond-in-D}
    \cond(f,z):=\frac{\|f\|_1}{\max\{|f(z)|,|f'(z)|/d\}},
\end{equation}
where $\|f\|_1:=\sum |f_a|$ is the \emph{1-norm} of $f$. 
important for obtaining bit complexity results. The same definition using the
$\ell_2$-norm is standard in numerical analysis literature, e.g.,
\cite{Higham-book-02}.

We also define the \emph{(real) global condition number of $f$} as
\begin{equation}
  \label{eq:condR}
  \condR(f):=\max_{x\in I}\cond(f,x).
\end{equation}

We note that as $\condR(f)$ becomes bigger, $f$ is closer to have 
a singular real zero inside $I$. This can be made precise through the so-called
condition number theorem (see~\cite[Theorem~4.4]{TCTcubeI-journal}). There are
many interesting properties of $\condR(f)$, but let us state the only one we
will use---see \cite[Theorem~4.2]{TCTcubeI-journal} for more.

\begin{theo}[2nd Lipschitz
property]\cite{TCTcubeI-journal}\label{theo:propconditionnumber}
Let $f\in\Pd$. The map $\bbD\ni z\mapsto 1/\cond(f,z)\in [0,1]$ is well-defined
and $d$-Lipschitz.\eproof
\end{theo}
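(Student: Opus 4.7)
The plan is to unpack the definition
\[
\cond(f,z)=\frac{\|f\|_1}{\max\{|f(z)|,|f'(z)|/d\}}
\]
and verify both claims by direct, elementary estimates. For well-definedness, I would show that for every $z\in\bbD$ both $|f(z)|$ and $|f'(z)|/d$ are bounded by $\|f\|_1$: writing $f=\sum_{a=0}^{d} f_a X^a$ and using $|z|\leq 1$, the triangle inequality immediately gives $|f(z)|\leq \sum_a |f_a||z|^a\leq \|f\|_1$ and $|f'(z)|\leq \sum_a a|f_a||z|^{a-1}\leq d\|f\|_1$. Hence $\max\{|f(z)|,|f'(z)|/d\}\leq \|f\|_1$, so $1/\cond(f,z)\in[0,1]$, with the convention $1/\infty=0$ handling the trivial case $f\equiv 0$.

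For the $d$-Lipschitz estimate, I would first apply the elementary inequality $|\max\{\alpha,\beta\}-\max\{\gamma,\delta\}|\leq \max\{|\alpha-\gamma|,|\beta-\delta|\}$ together with the reverse triangle inequality to reduce the problem to bounding $|f(z_1)-f(z_2)|$ and $|f'(z_1)-f'(z_2)|/d$. The key input is the telescoping identity $z_1^a-z_2^a=(z_1-z_2)\sum_{k=0}^{a-1}z_1^k z_2^{a-1-k}$, which on $\bbD$ yields $|z_1^a-z_2^a|\leq a\,|z_1-z_2|$. Applied coefficientwise, this gives $|f(z_1)-f(z_2)|\leq d\|f\|_1\,|z_1-z_2|$ and $|f'(z_1)-f'(z_2)|\leq d(d-1)\|f\|_1\,|z_1-z_2|$. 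Dividing the second by $d$ and combining through the max-reduction produces the Lipschitz constant $\max\{d,d-1\}=d$, after normalizing by $\|f\|_1$.

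Overall this amounts to careful bookkeeping rather than a deep argument, and I do not anticipate a genuine obstacle. The two points where care is needed are: (i) using the inequality for the difference of maxima, rather than a naive triangle inequality, so as not to pay a spurious factor of $2$; and (ii) matching the $1/d$ normalization in the definition of $\cond(f,z)$ against the degree blow-up that arises from differentiating once and telescoping. It is precisely this cancellation that makes the Lipschitz constant come out to $d$ rather than $d^2$, and it is the whole motivation for the $1/d$ factor in the local condition number.
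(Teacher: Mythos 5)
Your proof is correct. Note that the paper itself does not prove this statement---it cites \cite{TCTcubeI-journal} and marks the theorem with an end-of-proof symbol---so there is no in-text argument to compare against; but the route you take (the inequality $|\max\{\alpha,\beta\}-\max\{\gamma,\delta\}|\leq\max\{|\alpha-\gamma|,|\beta-\delta|\}$, the reverse triangle inequality, and the telescoping bound $|z_1^a-z_2^a|\leq a|z_1-z_2|$ on $\bbD$, applied to both $f$ and $f'/d$) is exactly the natural direct argument, and your bookkeeping correctly produces $\max\{d,d-1\}=d$ as the Lipschitz constant, which is indeed the reason for the $1/d$ normalization in the definition of $\cond(f,z)$.
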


\subsection{Condition-based estimates for separation}

The quantity that follows is the separation bound of polynomials and polynomial
systems, e.g., \cite{emt-dmm-j-19}, suitably adjusted in our setting. This
quantity and its condition-based estimate below will play a fundamental role in
our complexity estimates.

\begin{defi}\label{defi:realseparation}
For $\varepsilon\in\left[0,\frac{1}{d}\right)$ we set $I_\varepsilon:=\{z\in
\bbC\mid \dist(z,I)\leq \varepsilon\}$. If $f \in \Pd$, then the
\emph{$\varepsilon$-real separation of $f$}, $\Delta\uR_{\varepsilon}(f)$, is
    \[
\Delta_{\varepsilon}\uR(f):=\min\left\{\left|\zeta-\tilde{\zeta}\right|\mid
\zeta,\tilde{\zeta}\in I_\varepsilon ,\,f\left(\zeta\right)
    =f(\tilde{\zeta})=0\right\},
  \]
    if $f$ has no double roots in $I_\varepsilon$, 
    and $\Delta\uR_{\varepsilon}(f):=0$  otherwise.
\end{defi}

\begin{theo}[{\cite[Theorem 6.3]{TCTcubeI-journal}}]
\label{theo:condbasedseparation}
Let $f\in\Pd$ and assume   $\varepsilon\in\left[0, \frac{1}{\enumber
d\condR(f)}\right)$, then $\Delta_{\varepsilon}\uR(f)\geq
\frac{1}{12d\condR(f)}.$\eproof
\end{theo}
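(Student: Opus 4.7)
I would proceed by contradiction: suppose $\zeta, \tilde{\zeta} \in I_\varepsilon$ are two distinct roots of $f$ with $|\zeta - \tilde{\zeta}| < 1/(12 d \condR(f))$, and derive a clash with the definition of $\condR(f)$ via a near-double-root analysis at their midpoint.

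Set $m := (\zeta + \tilde{\zeta})/2$ and $h := (\zeta - \tilde{\zeta})/2$; by convexity of $I_\varepsilon$ (a Minkowski sum of convex sets), $m \in I_\varepsilon \subseteq (1+\varepsilon)\bbD$. Taylor-expanding $f$ about $m$ and plugging in $f(m + h) = f(m - h) = 0$, then taking half-sum and half-difference, gives
\[
f(m) = -\sum_{k\geq 1} \frac{f^{(2k)}(m)}{(2k)!}\, h^{2k}, \qquad f'(m) = -\sum_{k\geq 1} \frac{f^{(2k+1)}(m)}{(2k+1)!}\, h^{2k}.
\]
Both sides are $O(h^2)$, which quantifies how close $m$ lies to a configuration with a double root. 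A direct Cauchy-style bound yields $|f^{(k)}(m)|/k! \leq \binom{d}{k}(1+\varepsilon)^{d-k}\|f\|_1 \leq \enumber (d^k/k!)\|f\|_1$ under $\varepsilon \leq 1/d$, and summing the leading terms gives $\max\{|f(m)|,\,|f'(m)|/d\}\leq C\,(d|h|)^2\|f\|_1$ for a small explicit constant $C$; equivalently, $1/\cond(f,m)\leq C\,(d|h|)^2$.

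To transfer this back to the real interval $I$, I would invoke a mild extension of Theorem~\ref{theo:propconditionnumber} stating that $z \mapsto 1/\cond(f,z)$ is $\enumber d$-Lipschitz on $(1+\varepsilon)\bbD$; this is precisely why the factor $\enumber$ appears in the hypothesis on $\varepsilon$. Picking $x_0 \in I$ with $|x_0 - m| \leq \varepsilon$ (the projection of $m$ onto $I$) and using $\cond(f,x_0) \leq \condR(f)$,
\[
\frac{1}{\condR(f)} \leq \frac{1}{\cond(f,x_0)} \leq \frac{1}{\cond(f,m)} + \enumber d\,\varepsilon \leq C(d|h|)^2 + \enumber d\,\varepsilon.
\]
Rearranging, plugging in $\varepsilon < 1/(\enumber d\condR(f))$, and using the baseline bound $\condR(f) \geq 1$ (immediate from $|f(x)|\leq \|f\|_1$ and $|f'(x)|\leq d\|f\|_1$ on $I$), the resulting estimate yields $|h| \geq 1/(24 d\condR(f))$, whence $|\zeta - \tilde{\zeta}| = 2|h| \geq 1/(12 d\condR(f))$, the desired contradiction.

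The main obstacle I anticipate is the constant juggling in the final step: the Lipschitz slack $\enumber d\varepsilon$ consumes a large fraction of the $1/\condR(f)$ budget over the allowed range of $\varepsilon$, so the constant $12$ must be chosen conservatively in order to simultaneously absorb the Taylor-summation constant $C$ and the residual gap $1/\condR(f) - \enumber d\varepsilon$. A secondary technical point is that the 2nd Lipschitz property is stated in Theorem~\ref{theo:propconditionnumber} only on $\bbD$, so one first needs its extension to $(1+\varepsilon)\bbD$ with constant $\enumber d$---a routine refinement whose proof parallels that of the basic version.
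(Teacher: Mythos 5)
The paper does not actually prove Theorem~\ref{theo:condbasedseparation}: the \verb|\eproof| after the statement is just a QED marker, and the result is imported wholesale from \cite[Theorem~6.3]{TCTcubeI-journal}. So I can only evaluate your argument on its own terms. Your skeleton --- Taylor expansion at the midpoint $m$ of the two close roots, Cauchy bounds on $|f^{(k)}(m)|/k!$ over $(1+\varepsilon)\bbD$, translation into $1/\cond(f,m)\le C(d|h|)^2$, Lipschitz transfer back to $I$ --- is the natural route and almost certainly in the spirit of the cited proof. The Taylor and Cauchy steps are correct as written.

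There is, however, a genuine gap in the Lipschitz transfer, and it is not merely ``constant juggling.'' With the Lipschitz constant $\enumber d$ that you quote for $z\mapsto 1/\cond(f,z)$ on $(1+\varepsilon)\bbD$, your final inequality reads
\[
\frac{1}{\condR(f)}\le C(d|h|)^2 + \enumber d\,\varepsilon .
\]
But the hypothesis lets $\varepsilon$ range all the way up to $\frac{1}{\enumber d\condR(f)}$, at which point $\enumber d\,\varepsilon\to \frac{1}{\condR(f)}$ and the inequality degenerates to $0\le C(d|h|)^2$. In other words, the Lipschitz slack does not merely ``consume a large fraction'' of the budget --- in the limit it consumes \emph{all} of it, and no lower bound on $|h|$ survives. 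So the proof as written does not close.

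The fix is available and you already have the key observation in hand, just applied at the wrong spot. You correctly note $\condR(f)\ge 1$. Feed that into the hypothesis \emph{before} invoking Lipschitz continuity: it gives $\varepsilon<\frac{1}{\enumber d\condR(f)}\le\frac{1}{\enumber d}$, not merely $\varepsilon\le \frac{1}{d}$. On $(1+\varepsilon)\bbD$ with $\varepsilon\le 1/(\enumber d)$ the relevant bound is $(1+\varepsilon)^{d-1}\le \enumber^{1/\enumber}<1.45$, so the honest Lipschitz constant is $\enumber^{1/\enumber}d$ rather than $\enumber d$. Then $\enumber^{1/\enumber}d\,\varepsilon<\enumber^{1/\enumber-1}/\condR(f)\approx 0.53/\condR(f)$, leaving a positive residual
\[
\bigl(1-\enumber^{1/\enumber-1}\bigr)\frac{1}{\condR(f)}\;\le\; C(d|h|)^2,
\]
from which (again using $\condR(f)\ge 1$ to replace $\sqrt{\condR(f)}$ by $\condR(f)$ in the denominator) a bound of the form $|h|\gtrsim 1/(d\condR(f))$ does follow, with room to spare for the constant $12$. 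I would also suggest stating the Taylor-sum constant $C$ explicitly (for $d|h|\le 1/24$ one has $\cosh(d|h|)-1\le \tfrac{1}{2}(d|h|)^2\cdot(1+o(1))$, so $C$ can be taken close to $\enumber^{1/\enumber}/2$), since the whole argument hinges on a small numerical margin.
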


\subsection{Probabilistic bounds for condition numbers}\label{subsec:prob}

In this section we present our probabilistic framework. The main technical tools
are the anti-concentration results by Rudelson and
Vershynin~\cite{rudelsonvershynin2015}. We do not apply these results as a
black box, but we develop suitable variants for our setting
(Proposition~\ref{prop:reallinearprojectiondiscreterandomvector}).

\begin{theo}\label{theo:reallocalconditionnumberprob}
Let $\fkf \in \Pd^{\bbZ}$ be a  random bit polynomial and $x\in I$. Then, for
$t\leq 2^{\tau(\fkf)}$,
$
\bbP(\cond(\fkf,x)\geq t)\leq 16\,d^3\enumber^{2u(\fkf)}\frac{1}{t^2}
$.
\end{theo}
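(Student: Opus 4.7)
The plan is to reduce the event $\{\cond(\fkf,x) \geq t\}$ to a joint small-ball event for the pair of linear functionals $(\fkf(x),\fkf'(x))$ of the random coefficient vector, and then to apply a discrete anti-concentration estimate in the spirit of Proposition~\ref{prop:reallinearprojectiondiscreterandomvector}, exploiting the independence of the $\fkc_i$ and the uniformity-controlled atoms of the low-degree coefficients $\fkc_0,\fkc_1$.

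First, by the definition~\eqref{eq:cond-in-D} of $\cond(\fkf,x)$, the event $\{\cond(\fkf,x)\geq t\}$ is equivalent to
\[
|\fkf(x)| \leq \tfrac{\|\fkf\|_1}{t} \quad\text{and}\quad |\fkf'(x)| \leq \tfrac{d\|\fkf\|_1}{t}.
\]
Since $|\fkc_i| \leq 2^{\tau(\fkf)}$ almost surely, we have the deterministic bound $\|\fkf\|_1 \leq (d+1)\,2^{\tau(\fkf)}$. Setting $s := (d+1)\,2^{\tau(\fkf)}/t$, this gives the containment
\[
\{\cond(\fkf,x)\geq t\} \subseteq \{|\fkf(x)| \leq s\} \cap \{|\fkf'(x)| \leq d\,s\},
\]
which purges $\|\fkf\|_1$ from the right-hand side and yields a deterministic box event.

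Next, I condition on $(\fkc_2, \ldots, \fkc_d)$ and handle the anti-concentration in the remaining variables $\fkc_0, \fkc_1$. Writing $\fkf(x) = \fkc_0 + x\,\fkc_1 + A$ and $\fkf'(x) = \fkc_1 + B$ with $A, B$ measurable with respect to the conditioned $\sigma$-algebra, the Jacobian of $(\fkf(x),\fkf'(x))$ in $(\fkc_0, \fkc_1)$ is $\bigl(\begin{smallmatrix} 1 & x \\ 0 & 1 \end{smallmatrix}\bigr)$, which has determinant $1$ uniformly in $x \in I$. The inequality $|\fkc_1 + B| \leq d\,s$ forces $\fkc_1$ into a real interval of length $\leq 2ds$, containing at most $2ds+1$ integers, each taken with probability at most $w(\fkf)$; for each admissible $\fkc_1$, the inequality $|\fkc_0 + x\,\fkc_1 + A| \leq s$ forces $\fkc_0$ into an interval of length $\leq 2s$, containing at most $2s+1$ integers, each of probability at most $w(\fkf)$. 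By the independence of $\fkc_0, \fkc_1$, the conditional (hence unconditional) probability is at most $w(\fkf)^2\,(2ds+1)(2s+1)$.

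To conclude, the hypothesis $t \leq 2^{\tau(\fkf)}$ yields $s \geq d+1 \geq 1$, so $(2s+1)(2ds+1) \leq 16\,d\,s^2$; substituting $s = (d+1)\,2^{\tau(\fkf)}/t$ and using $w(\fkf) \leq \enumber^{u(\fkf)}/(2\cdot 2^{\tau(\fkf)})$ (from Definition~\ref{def:uniformity} together with $1+2^{\tau(\fkf)+1} \geq 2\cdot 2^{\tau(\fkf)}$), the powers of $2^{\tau(\fkf)}$ cancel and the stated bound $16\,d^3\,\enumber^{2u(\fkf)}/t^2$ follows after $(d+1)^2 \leq 4d^2$ for $d \geq 1$. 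The main subtlety is the choice of which pair of coordinates to freeze: picking $(\fkc_0, \fkc_1)$ yields a $2\times 2$ Jacobian whose determinant is $1$ uniformly on $I$, whereas the seemingly symmetric choice $(\fkc_{d-1}, \fkc_d)$ would produce determinant $x^{2d-2}$ and degenerate at $x=0$. This explains why the random bit polynomial model imposes controlled atom probabilities on exactly the four coefficients $\fkc_0, \fkc_1, \fkc_{d-1}, \fkc_d$: the $(\fkc_0,\fkc_1)$-argument above dispenses with the local condition number at any single point $x \in I$, while the boundary pair remains available for the companion arguments that will be needed for the global condition number $\condR(\fkf)$.
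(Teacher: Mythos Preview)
Your argument is correct and reaches the stated bound. The route, however, is genuinely different from the paper's. The paper conditions on $\fkc_2,\ldots,\fkc_{d-2}$, keeping \emph{all four} distinguished coefficients $\fkc_0,\fkc_1,\fkc_{d-1},\fkc_d$ random, forms the $2\times 4$ matrix of the affine map $f\mapsto(f(x),f'(x)/d)$ in these coordinates, and invokes Proposition~\ref{prop:reallinearprojectiondiscreterandomvector} (a Rudelson--Vershynin style small-ball bound) with a Cauchy--Binet lower bound $\sqrt{\det AA^*}\geq 1/d$. You instead condition on $\fkc_2,\ldots,\fkc_d$, keep only $\fkc_0,\fkc_1$ random, and replace the anti-concentration machinery by direct lattice-point counting in two nested intervals. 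Your approach is more elementary and entirely self-contained (no need for Proposition~\ref{prop:reallinearprojectiondiscreterandomvector} here), and it makes transparent why the determinant obstruction vanishes: the $2\times2$ block $\bigl(\begin{smallmatrix}1&x\\0&1\end{smallmatrix}\bigr)$ has unit determinant on all of $I$. The paper's approach buys reusability---Proposition~\ref{prop:reallinearprojectiondiscreterandomvector} is invoked again for the root-counting estimates in Section~\ref{sec:complexroots}---and would extend verbatim to non-integer coefficient models.

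One small correction to your closing commentary: the paper's proof of the \emph{global} bound (Theorem~\ref{theo:realglobalconditionnumberprob}) does not use the pair $(\fkc_{d-1},\fkc_d)$ directly; it is a covering/union-bound reduction to the local estimate you just proved. The reason the model controls $\fkc_{d-1},\fkc_d$ as well is the reciprocal symmetry $R(f)=X^d f(1/X)$ used to treat roots outside $(-1,1)$ (see the remark after Proposition~\ref{prop:polymaps}): under $R$ the roles of $(\fkc_0,\fkc_1)$ and $(\fkc_d,\fkc_{d-1})$ are exchanged, so your very argument, applied to $R(\fkf)$, is what requires the high-degree pair.
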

\begin{theo}\label{theo:realglobalconditionnumberprob}
Let $\fkf\in\Pd^{\bbZ}$ be a random bit polynomial. Then, for $t\leq
2^{\tau(\fkf)+1}$,
\[
\bbP(\cond_\bbR(\fkf)\geq t)\leq 32\, d^4 \enumber^{2u(\fkf)}\frac{1}{t}.
\]
\end{theo}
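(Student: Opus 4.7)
The plan is to upgrade the pointwise bound of Theorem~\ref{theo:reallocalconditionnumberprob} to a uniform bound over $I$ via an $\varepsilon$-net argument, with the net spacing dictated by the Lipschitz property from Theorem~\ref{theo:propconditionnumber}.

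Concretely, I would recall that $x \mapsto 1/\cond(\fkf,x)$ is $d$-Lipschitz on $\bbD \supseteq I$. Fix a parameter $c \in (0,1]$ and take an equispaced net $N \subseteq I$ with spacing $2c/(dt)$, so that $|N| = O(dt/c)$ and every $x \in I$ lies within distance $c/(dt)$ of some $x_0 \in N$. If $\cond_\bbR(\fkf) \geq t$, then there exists $x \in I$ with $1/\cond(\fkf,x) \leq 1/t$, and the Lipschitz estimate gives
\[
\frac{1}{\cond(\fkf,x_0)} \;\leq\; \frac{1}{\cond(\fkf,x)} + d\cdot\frac{c}{dt} \;\leq\; \frac{1+c}{t}
\]
for the nearest net point $x_0$, hence $\cond(\fkf,x_0) \geq t/(1+c)$. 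A union bound over $N$ combined with Theorem~\ref{theo:reallocalconditionnumberprob} applied at threshold $t/(1+c)$ yields
\[
\bbP(\cond_\bbR(\fkf)\geq t) \;\leq\; |N|\cdot 16\,d^3\,\enumber^{2u(\fkf)}\,\frac{(1+c)^2}{t^2} \;=\; O\!\left(\frac{d^4\,\enumber^{2u(\fkf)}}{t}\right) .
\]
The validity range $t\leq 2^{\tau(\fkf)+1}$ comes out precisely from the requirement $t/(1+c)\leq 2^{\tau(\fkf)}$ needed to invoke the pointwise bound, which matches the statement with $c=1$.

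No new probabilistic input is needed: all the anti-concentration work has been absorbed into Theorem~\ref{theo:reallocalconditionnumberprob}. The only delicate point, and what I would call the main obstacle, is the bookkeeping of constants: one has to balance the $(1+c)^2$ loss from the shifted threshold against the $1/c$ blow-up in the net cardinality, and then verify that optimizing in $c$ (or simply taking $c=1$) yields exactly the constant $32$ in the stated inequality. Beyond this accounting, the reduction from a pointwise bound to a uniform one for a Lipschitz functional on a bounded interval is completely routine.
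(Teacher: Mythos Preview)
Your proposal is correct and follows essentially the same route as the paper's own proof: the paper also uses the $d$-Lipschitz property of $x\mapsto 1/\cond(f,x)$ to pass to a finite net in $I$ with spacing $\delta=1/(dt)$ (your $c=1$), obtains $\cond(\fkf,x_0)\geq t/2$ at some net point whenever $\condR(\fkf)\geq t$, and concludes by a union bound together with Theorem~\ref{theo:reallocalconditionnumberprob} at threshold $t/2$, which is exactly where the range $t\leq 2^{\tau(\fkf)+1}$ comes from.
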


The following corollary looks somewhat different than
Thm.~\ref{theo:reallocalconditionnumberprob} and
Thm.~\ref{theo:realglobalconditionnumberprob}, but it has the same essence.
Unlike the continuous case, in the discrete case we have a worst-case estimate
that we can exploit to bound when the condition number is too large.

\begin{cor}\label{cor:realglobalexpectations}
Let $\fkf\in\Pd^{\bbZ}$ be a random bit polynomial, $\ell\in\bbN$ and $c\geq
1$. If $\tau(\fkf)\geq 4\ln(\enumber d)+2u(\fkf)$, then $\left(\bbE_\fkf
\left(\min\{\ln\condR(\fkf),c\}\right)^\ell\right)^{\frac{1}{\ell}}$ is at most
\[(\ell+1)
\left(4\ln(ed)+2u(\fkf)\right)+\Big(\frac{16\,d^4\enumber^{u(\fkf)}}{2^{\tau
(\fkf)}}\Big)^{\frac{1}{\ell}}c.\]
In particular, if $\tau(\fkf)\geq 4\ln(ed)+2u(\fkf)+2\ell\ln c$, then
\[\left(\bbE_\fkf
\left(\min\{\ln\condR(\fkf),c\}\right)^\ell\right)^{\frac{1}{\ell}}\leq
\Oh(\ell(\ln d+u(\fkf))).\]
\end{cor}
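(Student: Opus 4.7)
The plan is to apply truncation at a well-chosen threshold, then use the tail bound from Theorem~\ref{theo:realglobalconditionnumberprob} to control the tail probability; this is the standard recipe for converting a tail bound into a moment bound on a truncated random variable. Let $T:=\ln\condR(\fkf)$, which is non-negative since every local condition number is at least $1$, so $Y:=\min\{T,c\}\in[0,c]$.

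I would pick the threshold $s_0:=(\ell+1)\bigl(4\ln(ed)+2u(\fkf)\bigr)$. Since $Y\le s_0$ on $\{T\le s_0\}$ and $Y\le c$ always,
\[
\bbE\,Y^\ell=\bbE\bigl[Y^\ell\mathbf{1}_{T\le s_0}\bigr]+\bbE\bigl[Y^\ell\mathbf{1}_{T>s_0}\bigr]\le s_0^\ell+c^\ell\,\bbP(T>s_0).
\]
The tail probability $\bbP(T>s_0)=\bbP\bigl(\condR(\fkf)>e^{s_0}\bigr)$ needs care, because Theorem~\ref{theo:realglobalconditionnumberprob} only bounds $\bbP(\condR(\fkf)\ge t)$ for $t\le 2^{\tau(\fkf)+1}$. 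I would split on whether $e^{s_0}\le 2^{\tau(\fkf)+1}$. In that case the theorem applies directly, and since $s_0$ was chosen so that $e^{s_0}\ge(e^4 d^4 e^{2u(\fkf)})^{\ell+1}$, the resulting probability is dominated by $16 d^4 e^{u(\fkf)}/2^{\tau(\fkf)}$ thanks to the assumption $\tau(\fkf)\ge 4\ln(ed)+2u(\fkf)$. In the complementary case $e^{s_0}>2^{\tau(\fkf)+1}$, monotonicity yields $\bbP(T>s_0)\le \bbP\bigl(\condR(\fkf)>2^{\tau(\fkf)+1}\bigr)$, and Theorem~\ref{theo:realglobalconditionnumberprob} applied at $t=2^{\tau(\fkf)+1}$ delivers the same estimate. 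Either way one arrives at $\bbP(T>s_0)\le 16 d^4 e^{u(\fkf)}/2^{\tau(\fkf)}$.

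Substituting this into the display and invoking the elementary subadditivity $(a+b)^{1/\ell}\le a^{1/\ell}+b^{1/\ell}$ finishes the first inequality. For the ``in particular'' clause, the strengthened hypothesis $\tau(\fkf)\ge 4\ln(ed)+2u(\fkf)+2\ell\ln c$ forces $\bigl(16 d^4 e^{u(\fkf)}/2^{\tau(\fkf)}\bigr)^{1/\ell}c$ to be of magnitude $\mathcal{O}(1)$, while the first term is already $\mathcal{O}\bigl(\ell(\ln d+u(\fkf))\bigr)$, yielding the stated estimate. The main obstacle is precisely the restricted validity range of the tail bound: the monotonicity sub-case is what allows the argument to cover thresholds $e^{s_0}$ beyond $2^{\tau(\fkf)+1}$ and is the critical observation that distinguishes this discrete situation from its continuous counterpart (where no such truncation of the tail bound is needed).
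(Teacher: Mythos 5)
There is a genuine gap. Your plan rests on the claim that $\bbP(T>s_0)\le 16\,d^4 e^{u(\fkf)}/2^{\tau(\fkf)}$ for the single truncation threshold $s_0=(\ell+1)\bigl(4\ln(ed)+2u(\fkf)\bigr)$. In the sub-case $e^{s_0}\le 2^{\tau(\fkf)+1}$ you invoke Theorem~\ref{theo:realglobalconditionnumberprob} at $t=e^{s_0}$, which gives only $\bbP(T>s_0)\le 32\,d^4 e^{2u(\fkf)}\,e^{-s_0}$, a quantity that does not depend on $\tau(\fkf)$ at all. The hypothesis $\tau(\fkf)\ge 4\ln(ed)+2u(\fkf)$ is a \emph{lower} bound on $\tau(\fkf)$; as $\tau(\fkf)\to\infty$ the target $16\,d^4 e^{u(\fkf)}/2^{\tau(\fkf)}$ goes to $0$, while your bound $32\,d^4 e^{2u(\fkf)}e^{-s_0}$ stays fixed, so the asserted ``domination'' is simply false in this regime. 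The attempt to recover it from the assumption on $\tau(\fkf)$ has the inequality pointed the wrong way: you would need $2^{\tau(\fkf)}\le e^{s_0}/(2e^{u(\fkf)})$, i.e.\ an \emph{upper} bound on $\tau(\fkf)$.

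The failure is structural, not a constant-chasing issue: a Markov-style truncation at a single $s_0$ cannot reproduce a second term that decays like $(16\,d^4 e^{u(\fkf)}/2^{\tau(\fkf)})^{1/\ell}$, because that decay only sets in once one looks at the tail beyond $V:=\ln(2^{\tau(\fkf)+1})$, and choosing $s_0=V$ would blow up the first term to $V^\ell=\Theta(\tau(\fkf)^\ell)$. The paper sidesteps the dilemma by writing $\bbE\,Y^\ell=\int_0^c \ell s^{\ell-1}\bbP(Y\ge s)\,\mathrm{d}s$ and splitting the integral over $[0,U]$, $[U,V]$, $[V,c]$ with $U:=\ln(32\,d^4 e^{2u(\fkf)})$: on $[0,U]$ one bounds the probability by $1$ to get $U^\ell$; on $[U,V]$ one integrates $\ell s^{\ell-1}e^{U-s}$ and, after a change of variables and Gamma-function computations, this contributes only $\ell^\ell U^{\ell-1}$ (which is where the $(\ell+1)$ factor in the first term of the corollary comes from); and only on $[V,c]$ does one use the flat tail bound $e^{U-V}$, yielding the small term $e^{U-V}c^\ell$. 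Integrating against $\ell s^{\ell-1}$ on the middle interval is exactly what your truncation throws away, and it is what makes the final second term $\tau(\fkf)$-decaying rather than constant.
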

We would like to understand the limitations of the two theorems and the
corollary above. First, note that
Theorem~\ref{theo:reallocalconditionnumberprob} is meaningful when
$
\tau(\fkf)\geq 2+\frac{3}{2}\lg(d) +2u(\fkf)
$
and Theorem~\ref{theo:realglobalconditionnumberprob} is meaningful when
$
   \tau(\fkf)\geq 5 + 4\lg(2)+ 3u(\fkf)
$.
Intuitively, the randomness model needs some wiggling room to differ from the
worst-case analysis. In our case this translates to assume that the bit-size
$\tau(\fkf)$ is bigger than (roughly) $\lg(d) + u(\fkf)$. This is a reasonable
assumption 
because for most cases of
interest, $u(\fkf)$ is bounded above by a constant. Thus, the second condition
in Corollary~\ref{cor:realglobalexpectations} becomes
\[
\tau(\fkf)=\Omega(\ell\lg(d)+\lg(c)).
\]
Moreover, in the case of application of
Corollary~\ref{cor:realglobalexpectations}, we will have $c=d^{\Oh(1)}$. Hence
we are only imposing that the bit-size
$\tau(\fkf)$ is lower bounded by (roughly) $\ln d$, which is not uncommon in
practice.

For proving the above results, we need the following proposition. Recall that
for $A\in\bbR^{k\times N}$,  $\|A\|_{\infty,\infty}:=\sup_{v\neq
0}\frac{\|Av\|_\infty}{\|v\|_\infty}=\max_{i\in k}\|A^i\|_1$, where $A^i$ is
the $i$-th row of $A$.

\begin{prop}\label{prop:reallinearprojectiondiscreterandomvector}
Let $\fkx\in\bbZ^N$ be a random vector with independent coordinates. Assume
that there is $w>0$ so that for all $i$ and $x\in\bbZ$,
$\bbP(\fkx_i=x)\leq w$.
Then for every linear map $A\in\bbR^{k\times N}$, $b\in \bbR^k$ and
$\varepsilon\in[\|A\|_{\infty,\infty},\infty)$,
\[
\bbP(\|A\fkx+b\|_\infty\leq \varepsilon )\leq
2\frac{(2\sqrt{2}w\varepsilon)^k}{\sqrt{\det AA^*}} .
\]
\end{prop}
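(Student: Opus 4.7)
The plan is to reduce this small-ball estimate for the discrete random vector $\fkx$ to a continuous anti-concentration problem via a dithering trick, and then to invoke a Ball-type cube-slicing inequality.

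First I would introduce an independent random vector $\mathbf{u}$ uniform on $[0,1]^N$ and set $\tilde{\fkx} := \fkx + \mathbf{u}$. Each marginal $\tilde{\fkx}_i$ is then absolutely continuous with density bounded by $w$ everywhere on $\bbR$, since for non-integer $y$ one has $\rho_{\tilde{\fkx}_i}(y) = \bbP(\fkx_i = \lfloor y \rfloor) \leq w$. From $\|A\|_{\infty,\infty} = \max_i \sum_j |A_{ij}|$ one deduces $\|A\mathbf{u}\|_\infty \leq \|A\|_{\infty,\infty} \leq \varepsilon$ deterministically, so the inclusion of events
\[
\{\|A\fkx + b\|_\infty \leq \varepsilon\} \subseteq \{\|A\tilde{\fkx} + b\|_\infty \leq 2\varepsilon\}
\]
holds, and it suffices to bound the right-hand probability for the continuous vector $\tilde{\fkx}$.

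Next I would write $A = LQ$ with $Q \in \bbR^{k\times N}$ satisfying $QQ^* = I_k$ and $L \in \bbR^{k \times k}$ with $|\det L| = \sqrt{\det AA^*}$ (a QR/SVD-style factorization). The event $\|A\tilde{\fkx} + b\|_\infty \leq 2\varepsilon$ then becomes $Q\tilde{\fkx} \in L^{-1}(-b + [-2\varepsilon, 2\varepsilon]^k)$, a parallelepiped of $k$-volume $(4\varepsilon)^k/\sqrt{\det AA^*}$. It remains to bound the density of $Q\tilde{\fkx}$ uniformly. By the coarea formula this density equals a fiber integral of the product density $\prod_j \rho_j$ over an $(N-k)$-dimensional affine subspace; a standard rearrangement that replaces each marginal by the uniform density $w \cdot \mathbf{1}_{[0,1/w]}$ reduces the bound to a statement about $(N-k)$-dimensional sections of the cube $[0,1/w]^N$, to which a sharp Ball-type slicing inequality applies. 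Integrating the resulting pointwise density bound over the parallelepiped yields the claim.

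The main obstacle is isolating the sharp constants: one needs the slicing inequality valid for off-centre sections (not just central ones), together with a rearrangement that passes from arbitrary densities bounded by $w$ to the extremal uniform case. Matching the explicit constants $2$ and $2\sqrt{2}$ in the stated bound requires careful tracking of the multiplicative losses at each step---the dithering doubles the radius, the rearrangement replaces general marginals by the extremal ones, and the slicing inequality contributes a cube-section factor of $\sqrt{2}$ per codimension.
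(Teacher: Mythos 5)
Your proposal follows the paper's proof in both of its essential steps: dither the discrete vector into an absolutely continuous one with marginal densities bounded by $w$, and then apply the Rudelson--Vershynin small-ball/density bound for $A\fkz$ with bounded-density independent coordinates (with the Livshyts--Paouris--Pivovarov constants, resting on Ball's cube-slicing inequality, which the paper simply cites as \cite[Proposition~5.2]{TCTcubeI} rather than re-deriving). The only real difference is in the reduction to the continuous case: the paper dithers with $\fky$ uniform on $(-1/2,1/2)^N$ and writes $\bbP(\|A\fkx+b\|_\infty\le\varepsilon)\le \bbP(\|A(\fkx+\fky)+b\|_\infty\le 2\varepsilon)/\bbP(\|A\fky\|_\infty\le\varepsilon)$, controlling the denominator via Markov's inequality and a moment computation showing $\bbE\|A\fky\|_\infty\le\|A\|_{\infty,\infty}/2$, whereas your one-sided dither on $[0,1]^N$ yields the deterministic inclusion of events directly and sidesteps that auxiliary estimate---a mild simplification of the same argument.
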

\begin{proof}[Proof of Theorem~\ref{theo:reallocalconditionnumberprob}]
$\bbP(\cond(\fkf,x)\geq t)$ equals
\[
\sum_{a_2,\ldots,a_{d-2}}\bbP(\cond(\fkf,x)\geq t\mid
\fkc_2=a_2,\ldots,\fkc_{d-2}=a_{d-2})\prod_{i=2}^{d-2}\bbP(\fkc_i=a_i).
\]
where $\fkf=\sum_{k=0}^d\fkc_kX^k$. 
So it is enough to prove the bound for a random bit polynomial $\fkf$ of the
form
\[
\fkf=\fkc_0+\fkc_1X+\sum\nolimits_{k=2}^{d-2}a_kX^k+\fkc_{d-1}X^{d-1}+\fkc_dX^d
,
\]
where $a_2,\ldots,a_{d-2}\in \bbZ\cap [-2^\tau,2^\tau]$ are arbitrary fixed
integers.

Let $\Pd(a_2,\ldots,a_{d-2})$ be the affine subspace of $\Pd$ given by
$f_k=a_k$ for $k\in\{2,\ldots,d-2\}$. And let
\[
f\mapsto Af+b
\]
be the affine mapping that maps $f\in \Pd(a_2,\ldots,a_{d-2})$ to
$(f(x),f'(x)/d)\in\bbR^2$. In the coordinates we are working on (those of the
base $\{1,X,X^{d-1},X^d\}$), $A$ has the form
\[
\begin{pmatrix}
1&x&x^{d-1}&x^d\\
0&1/d&(1-1/d)x^{d-2}&x^{d-1}
\end{pmatrix}.
\]
So, by an elementary estimation we have $\|A\|_{\infty,\infty}\leq d+1$, and as
a direct result of Cauchy-Binet formula we have $\sqrt{\det AA^*}\geq 1/d$.
Now, since $\|\fkf\|_1\leq (d+1)2^{\tau(\fkf)}$, we have that
$
    \bbP(\cond(\fkf,x)\geq t)=\bbP(\|A\fkf+b\|_\infty\leq \|\fkf\|_1/t)
    \leq \bbP(\|A\fkf+b\|_\infty\leq (d+1)2^{\tau(\fkf)}/t)$.

To be able to use
Proposition~\ref{prop:reallinearprojectiondiscreterandomvector}, we need to
assume  $\frac{(d+1)2^{\tau(\fkf)}}{t} \geq d+1 \geq \norm{A}_{\infty,\infty}$.
Then, for $t \leq 2^{\tau(\fkf)}$,
Proposition~\ref{prop:reallinearprojectiondiscreterandomvector} implies
\[ \bbP(\cond(\fkf,x)\geq t) \leq 16 d (d+1)^2 (w 2^{\tau(\fkf)}/t)^2 . \]
Thus the proof is completed by the definition of $u(\fkf)$. 
\end{proof}
\begin{proof}[Proof of Theorem~\ref{theo:realglobalconditionnumberprob}]
We will use a covering/union bound argument. For any finite set
$\mcG\subset[-1,1]$ such that $\{[x-\delta,x+\delta]\mid x\in\mcG\}$ covers
$[-1,1]$, using the 2nd Lipschitz property
(Theorem~\ref{theo:propconditionnumber}), we have
$1/\max_{x\in\mcG}\cond(f,x)\leq 1/\condR(f)+d\delta$. Let $\delta=1/dt$, then
$
\bbP(\condR(\fkf)\geq t)\leq \bbP\left(\max_{x\in\mcG}\condR(\fkf,x)\geq
t/2\right)\leq \#\mcG\,\max_{x\in [-1,1]}\bbP(\cond(\fkf,x)\geq t/2).
$
We can construct such $\mcG$ such that $\#\mcG\leq 2dt$. Hence the claim
follows from Theorem~\ref{theo:reallocalconditionnumberprob}.
\end{proof}
\begin{proof}[Proof of Corollary~\ref{cor:realglobalexpectations}]
Let
\[
U:=\ln(32\,d^4\enumber^{2u(\fkf)})\leq 4\ln(ed)+2u(\fkf)\text{ and }
V:=\ln(2^{\tau(\fkf)+1}) .
\]
By assumption, $U\leq V$ and $U>1$, since $u(\fkf)\geq 0$.
So without loss of generality, we  assume
$
0<U<V<c.
$
If $c\leq V$, then similar arguments imply 
that the claimed upper bound still holds.
Thus 
\[
\bbE_\fkf \left(\min\{\ln\condR(\fkf),c\}\right)^\ell=\int_{0}^c \ell
s^{\ell-1}\bbP(\min\{\ln\condR(\fkf),c\}\geq s)\,\mathrm{d}s.
\]
We divide the integral into three summands using the intervals $[0,U]$, $[U,V]$
and $[V,c]$.

In $[0,U]$, we have that $\bbP(\min\{\ln\condR(\fkf),c\}\geq s)\leq 1$, and so
\[
\int_{0}^U \ell s^{\ell-1}\bbP(\min\{\ln\condR(\fkf),c\}\geq
s)\,\mathrm{d}s\leq U^\ell .
\]
In $[U,V]$, by Theorem~\ref{theo:realglobalconditionnumberprob} we have that
\[
\bbP(\min\{\ln\condR(\fkf),c\}\geq s)\leq \bbP(\ln\condR(\fkf)\geq s)\leq
e^{U-s},
\]
and so $\int_{U}^V \ell s^{\ell-1}\bbP(\min\{\ln\condR(\fkf),c\}\geq
s)\,\mathrm{d}s$ is bounded by $\int_{U}^V \ell
s^{\ell-1}e^{U-s}\,\mathrm{d}s$. By performing a change of variables and
extending the domain, we get $\int_{0}^{\infty} \ell
(s+U)^{\ell-1}e^{-s}\,\mathrm{d}s$. The latter, expanding the binomial
$(s+U)^{\ell-1}$ and using that $\Gamma(k+1)=k!$, is bounded by $\ell
\sum_{k=0}^{\ell-1}\binom{\ell-1}{k}k!U^{\ell-1-k}$.
Hence, as $\binom{\ell-1}{k}!\leq \ell^{\ell-1}$, we get
\[
\int_{U}^V \ell s^{\ell-1}\bbP(\min\{\ln\condR(\fkf),c\}\geq
s)\,\mathrm{d}s\leq \ell^\ell U^{\ell-1}.
\]

In $[V,c]$, we have that
\[
\bbP(\min\{\ln\condR(\fkf),c\}\geq s)
\leq \bbP(\ln\condR(\fkf)\geq V)\leq e^{U-V}.
\]
Therefore, since $e^{U-V}\int_{V}^c \ell s^{\ell-1}\,\mathrm{d}s\leq
e^{U-V}\int_{0}^c \ell s^{\ell-1}\,\mathrm{d}s$,
\[
\int_{V}^c \ell s^{\ell-1}\bbP(\min\{\ln\condR(\fkf),c\}\geq
s)\,\mathrm{d}s\leq e^{U-V}c^{\ell} .
\]
%
To obtain the final estimate, we add the three upper bounds obtaining the uper
bound
$
U^\ell+\ell^\ell U^{\ell-1}+e^{U-V}c^\ell.
$
After substituting the values of $U$ and $V$ and some easy estimations, we
conclude.
\end{proof}

\begin{proof}[Proof of
Proposition~\ref{prop:reallinearprojectiondiscreterandomvector}]
Let $\fky\in\bbR^N$ be such that the $\fky_i$ are independent and uniformly
distributed in $(-1/2,1/2)$. Now, a simple computation shows that $\fkx+\fky$
is absolutely continuous and each component has density given by
\[
\delta_{\fkx_i+\fky_i}(t)
=\sum\nolimits_{s\in \bbZ}\bbP(\fkx_i=s)\delta_{\fky_i}(t-s).
\]
Thus each component of $\fkx+\fky$ has density bounded by $w$.
We have
\[
\bbP(\|A\fkx+b\|_\infty\leq \varepsilon)\leq \bbP(\|A(\fkx+\fky)+b\|_\infty\leq
2\varepsilon)/\bbP(\|A\fky\|_\infty\leq \varepsilon),
\]
since $\fkx$ and $\fky$ are independent, and by the triangle inequality.

On the one hand, we apply \cite[Proposition~5.2]{TCTcubeI} (which is nothing
more than~\cite[Theorem~1.1]{rudelsonvershynin2015} with the explicit constants
of~\cite{livshytspaourispivovarov2016}). The latter states that for a random
vector $\fkz\in\bbR^N$ with independent coordinates with density bounded by
$\rho$ and $A\in \bbR^{k\times N}$, we have that $A\fkz$ has density bounded by
$(\sqrt{2}\rho)^k/\sqrt{\det AA^*}$. Thus
\[
\bbP(\|A(\fkx+\fky)+b\|_\infty\leq 2\varepsilon)\leq
(2\sqrt{2}w\varepsilon)^k/\sqrt{\det AA^*}.
\]
On the other hand,
\[
\bbP(\|A\fky\|_\infty\leq \varepsilon)=1-\bbP(\|A\fky\|_\infty\geq
\varepsilon)\geq 1-\bbE\|A\fky\|_\infty/\varepsilon .
\]
by Markov's inequality.
Now, by our assumption on $\varepsilon$, we only need to show that
$\bbE\|A\fky\|_\infty\leq \|A\|_{\infty,\infty}/2$.

By Jensen's inequality,
\[
\bbE\|A\fky\|_\infty=\bbE\lim_{\ell\to\infty}\|A\fky\|_{2\ell}\leq
\lim_{\ell\to\infty}\left(\bbE\|A\fky\|_{2\ell}^{2\ell}\right)^{\frac{1}{2\ell}
} .
\]
Expanding the interior and computing the moments of $\fky$, we obtain
\[
\bbE\|A\fky\|_\infty\leq
\lim_{\ell\to\infty}\left(\sum_{i=1}^k\sum_{|\alpha|=\ell}\binom{2\ell}{2\alpha
}\prod_{j=1}^n\left(A_{i,j}^{2\alpha_j}(1/2)^{2\alpha_j}/(2\alpha_j+1)\right
)\right)^{\frac{1}{2\ell}},
\]
since the odd moments disappear. Thus
\[
\bbE\|A\fky\|_\infty\leq 
\frac{1}{2}\lim_{\ell\to\infty}\left(\sum_{i=1}^k\sum_{|\alpha|=2\ell}\binom
{2\ell}{\alpha}\prod_{j=1}^n\left(|A_{i,j}|^{\alpha_j}\right)\right)^{\frac{1}
{2\ell}}=\frac{\|A\|_{\infty,\infty}}{2},
\]
where we obtained the bound of $\|A\|_{\infty,\infty}/2$ after doing the
binomial sum and taking the limit.
\end{proof}

\section{Number of complex roots}\label{sec:complexroots}

To control the number of complex roots, we will use results from complex
analysis and the probabilistic bounds from Section~\ref{sec:cond-sep-prob}.
Note that we
cannot bound the number of complex roots inside $\bbD$, because the symmetry on
our randomness model forces any bound on the number of roots in  $\bbD$ to be
of the form $\Oh(d)$. For of this, we consider a family of disks
$\{\bbD(\xi_{n,N},\rho_{n,N})\}_{n=-N}^N$, inspired by the one in
\cite{moroz2021}, where we will specify  $N$ in the sequel.
In particular,
\begin{equation}
\label{eq:Disc-center}
\xi_{n,N}=\begin{cases}
\sgn(n)\left(1-\frac{3}{4}\frac{1}{2^{|n|}}\right),&\text{if }|n|\leq N-1\\
\sgn(n)\left(1-\frac{1}{2^{N}}\right),&\text{if }|n| = N
\end{cases}
\end{equation}
\begin{equation}
\label{eq:Disc-radius}
\rho_{n,N}=\begin{cases}
\frac{3}{8}\frac{1}{2^{|n|}},&\text{if }|n|\leq N-1\\
\frac{3}{2}\frac{1}{2^{N}},&\text{if }|n| = N
\end{cases}.
\end{equation}
We will abuse notation and write $\xi_n$ and $\rho_n$ instead of $\xi_{n,N}$
and $\rho_{n,N}$ since we will not be working with different $N$'s at the same
time, but only with one $N$ which might not have a prefixed value. For this
family of disks, we will give a deterministic and a probabilistic bound for the
number of roots in their union, when $N = \lceil \lg{d} \rceil$,
\begin{equation}
\label{eq:Omega_N}
\varrho(f):=\#\Bigg\{ z\in\Omega_d:=\bigcup_{n=-\lceil\log d\rceil}^{\lceil\log
d\rceil}\bbD(\xi_{n},\rho_{n})\mid f(z)=0 \Bigg\} ,
\end{equation}
where $f\in \Pd$.
We use these bounds to estimate the number of steps of \nameref{alg:Descartes}.

\subsection{Deterministic bound}

\begin{theo}\label{theo:deterministicbopundroots}
Let $f\in\Pd$. Then
\[
\varrho(f)\leq \sum_{n=-\lceil\log d\rceil}^{\lceil\log d\rceil}\log
\frac{\enumber\|f\|_1}{|f(\xi_{n})|}.
\]
\end{theo}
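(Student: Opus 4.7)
The natural approach is Jensen's formula applied disk-by-disk. For each $n$, let $k_n$ denote the number of zeros (with multiplicity) of $f$ in $\bbD(\xi_n,\rho_n)$; then $\varrho(f)\le\sum_n k_n$. Fix $n$ and an outer radius $R_n>\rho_n$, and form the finite Blaschke product
\[
B_n(z)\;=\;\prod_{\substack{\zeta:\,f(\zeta)=0\\|\zeta-\xi_n|<R_n}}\frac{R_n(z-\zeta)}{R_n^2-\overline{(\zeta-\xi_n)}(z-\xi_n)}.
\]
Since $|B_n|\equiv 1$ on $\partial\bbD(\xi_n,R_n)$, the quotient $g_n=f/B_n$ is holomorphic and non-vanishing on $\overline{\bbD(\xi_n,R_n)}$ with $|g_n|=|f|$ on the boundary, so the maximum principle yields $|g_n(\xi_n)|\le M_n:=\max_{|z-\xi_n|=R_n}|f(z)|$; combined with $|B_n(\xi_n)|\le(\rho_n/R_n)^{k_n}$ this produces the Jensen-type bound
\[
k_n\log(R_n/\rho_n)\;\le\;\log(M_n/|f(\xi_n)|).
\]

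The plan is to choose $R_n=e\rho_n$, so that the left factor collapses to $k_n$, and to show $M_n\le e\|f\|_1$, which together imply the per-disk inequality $k_n\le\log(e\|f\|_1/|f(\xi_n)|)$. To control $M_n$ I would invoke the elementary estimate $|f(z)|\le\|f\|_1\max(1,|z|)^d$ valid for every $z\in\bbC$, which reduces the task to the geometric condition $|\xi_n|+e\rho_n\le e^{1/d}$. The calibration of $\xi_n,\rho_n$ together with $N=\lceil\log d\rceil$ is tailor-made for this: one has $|\xi_n|+2\rho_n\le 1$ for every $|n|\le N-1$, and $|\xi_N|+\rho_N=1+2^{-(N+1)}\le 1+1/(2d)$ for the endpoint indices (using $2^N\ge d$), so the excess $|\xi_n|+e\rho_n-1$ is $O(1/d)$ for those indices where $\rho_n$ is itself of order $1/d$, and this is absorbed by $e^{1/d}-1\ge 1/d$. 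Summing the per-disk bound over $n$ yields the theorem.

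The main obstacle is the ``central'' indices where $\rho_n$ is of constant order---most strikingly $n=0$, where $R_0=3e/8$ and the outer disk $\bbD(0,R_0)$ protrudes well beyond $\overline{\bbD}$, so the naive estimate $M_n\le e\|f\|_1$ fails outright. I see two natural remedies. The first is to pick a smaller outer radius for these indices, say $R_n=2\rho_n$, which keeps $\bbD(\xi_n,R_n)\subseteq\overline{\bbD}$ and still yields a Jensen bound after a careful accounting of the change in the logarithm base against the $\log e$ slack in the target inequality, summed over the $O(\log d)$ indices. A cleaner alternative is to re-center via the M\"obius automorphism $\phi_{\xi_n}(z)=(z-\xi_n)/(1-\xi_n z)$ of $\overline{\bbD}$ and apply the Poisson--Jensen formula on $\overline{\bbD}$ directly, which gives
\[
\sum_{k:\,f(\zeta_k)=0,\,|\zeta_k|\le 1}\log\frac{1}{|\phi_{\xi_n}(\zeta_k)|}\;\le\;\log\frac{\|f\|_1}{|f(\xi_n)|},
\]
from which $k_n$ can be read off by lower-bounding $\log(1/|\phi_{\xi_n}(\zeta_k)|)\ge 1$ for $\zeta_k\in\bbD(\xi_n,\rho_n)$ via the calibration of $\rho_n$ in hyperbolic distance. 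Either path reduces to a constants-chasing exercise that crucially uses $N=\lceil\log d\rceil$.
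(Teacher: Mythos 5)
Your overall strategy is the same as the paper's: bound the number of roots in each disk $\bbD(\xi_n,\rho_n)$ by a Jensen-type estimate and sum over $n$. The ``classic result of Titchmarsh'' the paper invokes in Lemma~\ref{lem:rootsinsmalldisk} is precisely the Blaschke-product bound you derive, namely $k_n\ln(R_n/\rho_n)\le\ln\bigl(M_n/|f(\xi_n)|\bigr)$ with $M_n=\max_{|z-\xi_n|=R_n}|f(z)|$, so the key lemma is not genuinely different. Where you diverge is in calibrating $R_n$, and that is exactly where the proposal has gaps.

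Choosing $R_n=e\rho_n$ to ``collapse'' the left factor is the source of the trouble. The paper instead takes expansion factor $2$ (Titchmarsh with $\delta=1/2$), which gives $k_n\le\log_2\bigl(M_n/|f(\xi_n)|\bigr)$, and controls $M_n$ by allowing the enlarged disk to reach into $(1+1/d)\bbD$ rather than insisting on $\overline{\bbD}$: since $|f(z)|\le \enumber\,\|f\|_1$ for $|z|\le 1+1/d$, one gets $M_n\le \enumber\,\|f\|_1$. That is where the factor $\enumber$ in the statement comes from; it is not a log-base bookkeeping device, and your proposed ``accounting of the change in the logarithm base against the $\log e$ slack'' does not actually yield the inequality (for instance, with $R_n=2\rho_n$ and $M_n\le\|f\|_1$ one gets $k_n\le(\ln 2)^{-1}\ln(\|f\|_1/|f(\xi_n)|)$, which is \emph{not} $\le 1+\ln(\|f\|_1/|f(\xi_n)|)$ once the ratio $\|f\|_1/|f(\xi_n)|$ exceeds a fixed constant).

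The more substantive gap is the endpoint indices $n=\pm N$, which neither remedy handles. Your own computation $|\xi_N|+\rho_N=1+2^{-(N+1)}$ shows that $\bbD(\xi_N,\rho_N)$ already protrudes past $\overline{\bbD}$ before any enlargement, and with any reasonable expansion factor the enlarged disk leaves $(1+1/d)\bbD$ as well (e.g.\ $|\xi_N|+2\rho_N=1+2^{-(N-1)}$ which can be as large as $1+2/d$). So the constraint ``$\bbD(\xi_n,R_n)\subseteq\overline{\bbD}$'' in remedy~1 fails, and remedy~2 (Poisson--Jensen on $\overline{\bbD}$) fails outright there: Jensen's formula on $\overline{\bbD}$ simply does not see the roots of $f$ lying in $\bbD(\xi_N,\rho_N)\setminus\overline{\bbD}$, so the quantity $k_N$ cannot be ``read off'' from that inequality. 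In short, the proposal correctly diagnoses the central-index difficulty but defers precisely the constants-chasing at $n=\pm N$ that carries the weight of the argument.
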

\begin{lem}\label{lem:rootsinsmalldisk}
Let $f\in\Pd$, $\xi\in \bbD$, and $\rho>0$. 
If $|\xi|+2\rho<1+1/d$, then
$
	\#(\mcZ(f)\cap \bbD(\xi,\rho))
	\leq \log (\enumber\|f\|_1/|f(\xi)|)
$.
\end{lem}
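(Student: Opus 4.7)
The plan is to apply Jensen's formula on the disc of radius $2\rho$ centered at $\xi$ and then exploit the condition $|\xi|+2\rho<1+1/d$ to bound the integrand by means of the $1$-norm of $f$.

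First, I would reduce to the case $f(\xi)\neq 0$: if $f(\xi)=0$ the right-hand side is $+\infty$ (or undefined) and there is nothing to prove. Assuming $f(\xi)\neq 0$, let $z_1,\dots,z_n$ be the roots of $f$ inside the closed disc $\overline{\bbD(\xi,2\rho)}$ counted with multiplicities (with $n\leq d$). Since $f$ is a polynomial, Jensen's formula on $\bbD(\xi,2\rho)$ yields
\[
\frac{1}{2\pi}\int_0^{2\pi}\log\bigl|f(\xi+2\rho\,e^{\img\theta})\bigr|\,\mathrm{d}\theta
\;=\;\log|f(\xi)|+\sum_{i=1}^n\log\frac{2\rho}{|z_i-\xi|}.
\]
Each root $z_i$ that actually lies in $\bbD(\xi,\rho)$ satisfies $|z_i-\xi|\leq \rho$ and hence contributes a term at least $\log 2$ to the sum on the right. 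Dropping the (non-negative) contributions of the roots lying in the annulus $\rho<|z-\xi|\leq 2\rho$, this gives
\[
\#(\mcZ(f)\cap \bbD(\xi,\rho))\cdot\log 2
\;\leq\;
\frac{1}{2\pi}\int_0^{2\pi}\log\bigl|f(\xi+2\rho\,e^{\img\theta})\bigr|\,\mathrm{d}\theta-\log|f(\xi)|.
\]

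Next, I would bound the integral. For every $z$ on the circle $|z-\xi|=2\rho$ the hypothesis $|\xi|+2\rho<1+1/d$ gives $|z|<1+1/d$, and the standard estimate $|f(z)|\leq \|f\|_1\max(1,|z|)^d$ combined with $(1+1/d)^d\leq \enumber$ yields
\[
\log|f(z)|\;\leq\;\log\bigl(\enumber\|f\|_1\bigr).
\]
Substituting this uniform bound into the averaged integral and combining with the previous inequality produces
\[
\#(\mcZ(f)\cap \bbD(\xi,\rho))\;\leq\;\log\!\left(\frac{\enumber\|f\|_1}{|f(\xi)|}\right),
\]
which is the claim (taking the logarithm in base $2$, consistently with the statement of Theorem~\ref{theo:deterministicbopundroots}).

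There is no substantial obstacle: the only things to watch are (i) the consistency of the logarithm base used to pass from the factor $\log 2$ on the left to the bound on the right, and (ii) the handling of roots lying exactly on the boundary circle $|z-\xi|=2\rho$, which can be treated either by the standard form of Jensen's formula for polynomials or by a harmless limiting argument (shrinking $\rho$ slightly). Both are routine.
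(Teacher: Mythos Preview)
Your proof is correct and is essentially the same as the paper's. The paper cites a classical inequality of Titchmarsh (with $\delta=1/2$), which is exactly Jensen's formula applied to the disc $\bbD(\xi,2\rho)$ in the way you carried out, and then uses the identical estimate $|f(z)|\leq \enumber\|f\|_1$ for $z\in(1+1/d)\bbD$; you have simply unpacked the citation.
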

\begin{proof}[Proof of Theorem~\ref{theo:deterministicbopundroots}]
We only have to apply subadditivity and Lemma~\ref{lem:rootsinsmalldisk}. Note
that the condition of the Lemma~\ref{lem:rootsinsmalldisk} holds for every disk
$\bbD(\xi_{n},\rho_{n})$ in $\Omega_d$.
\end{proof}

\begin{proof}[Proof of Lemma~\ref{lem:rootsinsmalldisk}]
We use a classic result of Titchmarsh~\cite[p.~171]{titchmarsh1939} that bounds
the number of roots in a disk. For $\delta\in(0,1)$, we have that
$
\#(\mcZ(f)\cap \bbD(\xi,\rho))\leq (\ln(1/\delta))^{-1}\ln
(\max_{z\in\bbD}|f(\xi+\rho z/\delta)|/|f(\xi)|)
$.

Take $\delta=1/2$. By our assumption, $\xi+2\rho \bbD\in (1+1/d)\bbD$, so $
\max_{z\in\bbD}|f(\xi+\rho z/\delta)|\leq \max_{z\in(1+1/d)\bbD}|f(z)|\leq
\enumber \|f\|_1
$,
since $|f(z)|\leq \enumber \|f\|_1$, for $z\in(1+1/d)\bbD$~\cite[Proposition
3.9.]{TCTcubeI-journal}.
\end{proof}

\subsection{Probabilistic bound}

\begin{theo}\label{theo:probbopundroots}
Let $\fkf\in\Pd^{\bbZ}$ be a random bit polynomial. Then for all $t\leq
\tau(\fkf)(2\lceil\lg d\rceil+1)$,
\[
\bbP\left(\varrho(\fkf)\geq t\right)
\leq 
44d^2{(2\lceil\lg d\rceil+1)}\enumber^{u(\fkf)}\enumber^{-\frac{t}{{2\lceil\lg
d\rceil+1}}}.
\]
\end{theo}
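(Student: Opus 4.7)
My plan is to combine the deterministic root-counting bound of Theorem~\ref{theo:deterministicbopundroots} with the condition-number tail estimate of Theorem~\ref{theo:reallocalconditionnumberprob} via a pigeonhole step and a union bound. The key idea is that an unusually large number of roots in $\Omega_d$ forces $\fkf$ to be very small at some center $\xi_n$, which in turn forces the local condition number at $\xi_n$ to be large.

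Set $N := \lceil \lg d \rceil$ and $s := t/(2N+1)$. By Theorem~\ref{theo:deterministicbopundroots},
\[
\varrho(\fkf) \;\leq\; \sum_{n=-N}^{N} \ln \frac{\enumber \|\fkf\|_1}{|\fkf(\xi_n)|},
\]
and each of the $2N+1$ summands is at least $1$, since $|\fkf(\xi_n)| \leq \|\fkf\|_1$ whenever $|\xi_n| < 1$. Hence $\{\varrho(\fkf) \geq t\}$ forces, by pigeonhole, at least one summand to be $\geq s$; that is, $\|\fkf\|_1/|\fkf(\xi_n)| \geq \enumber^{s-1}$ for some $n$. Since every $\xi_n$ lies in $I$, the definition of the local condition number gives $\cond(\fkf,\xi_n) \geq \|\fkf\|_1/|\fkf(\xi_n)|$, so this event is contained in $\{\cond(\fkf, \xi_n) \geq \enumber^{s-1}\}$.

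A union bound now reduces the task to bounding $\bbP(\cond(\fkf,\xi_n) \geq \enumber^{s-1})$ uniformly in $n$:
\[
\bbP(\varrho(\fkf) \geq t) \;\leq\; (2N+1) \max_{|n|\leq N} \bbP\bigl(\cond(\fkf, \xi_n) \geq \enumber^{s-1}\bigr).
\]
Theorem~\ref{theo:reallocalconditionnumberprob} directly yields a bound of the form $d^3 \enumber^{2u(\fkf)} \enumber^{-2s}$, i.e.\ with exponents that are twice as large as the ones in the statement. To reshape this into the advertised $d^2 \enumber^{u(\fkf)} \enumber^{-s}$ form, I would invoke the elementary inequality $\min(1, X) \leq \sqrt{X}$ for $X \geq 0$, applied to $X = 16 d^3 \enumber^{2u(\fkf)}/T^2$ with $T = \enumber^{s-1}$. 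This trades a $1/T^2$ decay for a $1/T$ decay at the cost of taking square roots of the prefactor, yielding $\bbP(\cond(\fkf,\xi_n) \geq \enumber^{s-1}) \leq 4\enumber\, d^{3/2}\, \enumber^{u(\fkf)} \enumber^{-s}$. Since $4\enumber\, d^{3/2} \leq 44 d^2$ for $d \geq 1$, summing over the $2N+1$ choices of $n$ produces the claimed bound.

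The principal technical obstacle I foresee is the range of validity of Theorem~\ref{theo:reallocalconditionnumberprob}, which is stated for $T \leq 2^{\tau(\fkf)}$. For $s$ close to $\tau(\fkf)$ with $\tau(\fkf)$ not tiny, we may have $\enumber^{s-1} > 2^{\tau(\fkf)}$ and fall outside that range. This is handled by extending the tail estimate via monotonicity, $\bbP(\cond(\fkf,\xi_n) \geq T) \leq \bbP(\cond(\fkf,\xi_n) \geq 2^{\tau(\fkf)})$ for $T > 2^{\tau(\fkf)}$, together with the trivial bound $\bbP \leq 1$ in the residual sub-regime, and a small case check verifying that the claimed $44 d^2 (2N+1) \enumber^{u(\fkf)} \enumber^{-s}$ already exceeds $1$ where neither the condition-number estimate nor the $\min$-and-square-root step suffices.
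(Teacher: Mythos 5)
Your overall scaffolding---invoking Theorem~\ref{theo:deterministicbopundroots}, pigeonholing one of the $2\lceil\lg d\rceil+1$ summands to be at least $t/(2\lceil\lg d\rceil+1)$, taking a union bound over the centers $\xi_n$, and then calibrating $s=\enumber^{t/(2\lceil\lg d\rceil+1)}$---is exactly the paper's structure. The gap is in the one step where you pass from the small-value event at $\xi_n$ to a condition-number tail. You assert that ``the definition of the local condition number gives $\cond(\fkf,\xi_n)\geq\|\fkf\|_1/|\fkf(\xi_n)|$,'' but the inequality goes the other way: by definition $\cond(f,z)=\|f\|_1/\max\{|f(z)|,|f'(z)|/d\}\leq\|f\|_1/|f(z)|$. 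Consequently the event $\{\|\fkf\|_1/|\fkf(\xi_n)|\geq\enumber^{s-1}\}$ is \emph{not} contained in $\{\cond(\fkf,\xi_n)\geq\enumber^{s-1}\}$; a polynomial can vanish at $\xi_n$ and still have small condition number there, provided $|\fkf'(\xi_n)|/d$ is large. So Theorem~\ref{theo:reallocalconditionnumberprob} gives no control over $\bbP(|\fkf(\xi_n)|$ small$)$, and the square-root trick you then apply is operating on the wrong tail.

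What one actually needs here is anti-concentration of the single linear functional $f\mapsto f(\xi_n)$, not of the pair $(f(\xi_n),f'(\xi_n)/d)$. The paper gets this by applying Proposition~\ref{prop:reallinearprojectiondiscreterandomvector} directly with $k=1$ and the $1\times4$ row $A=(1,\,x,\,x^{d-1},\,x^d)$ (conditioning on the middle coefficients as in the proof of Theorem~\ref{theo:reallocalconditionnumberprob}). This yields $\bbP\!\left(\enumber\|\fkf\|_1/|\fkf(x)|\geq s\right)\leq 44\,d^2\,\enumber^{u(\fkf)}/s$ for $s\leq 2^{\tau(\fkf)}$, with a $1/s$ decay coming out automatically (no square-root manoeuvre needed, and the $d$-prefactor is actually better than what your trick would produce). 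If you replace your condition-number step with this direct application of the $k=1$ anti-concentration bound, the rest of your argument, including the range bookkeeping, goes through and coincides with the paper's proof.
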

\begin{cor}\label{cor:probboundroots}
Let $\fkf\in\Pd^{\bbZ}$ be a random bit polynomial and $\ell\in \bbN$. Suppose
that $\tau(\fkf)\geq 10\ln(\enumber d)+2u(\fkf)$. Then
\[
\Big(\bbE\varrho(\fkf)^\ell\Big)^{\frac{1}{\ell}}
\leq 
2(1+\ell)(6\ln(ed)+u(\fkf))\ln(\enumber
d)+\Big(\frac{44d^{3+2\ell}\enumber^{u(\fkf)}}{2^{\tau(\fkf)}}\Big)^{\frac{1}
{\ell}}.
\]
In particular, if $\tau(\fkf)\geq (9+3\ell)\ln(\enumber d)+2u(\fkf)$, then
\[
\left(\bbE\varrho(\fkf)^\ell\right)^{\frac{1}{\ell}}\leq \Oh\left(\ell (\ln
d+u(\fkf))\ln d\right).
\]
\end{cor}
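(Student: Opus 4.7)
The strategy mirrors the proof of Corollary~\ref{cor:realglobalexpectations}. Since $\varrho(\fkf) \leq d$ with probability one,
\[
\bbE\, \varrho(\fkf)^\ell \;=\; \int_0^d \ell\, s^{\ell-1}\,\bbP\bigl(\varrho(\fkf) \geq s\bigr)\,\mathrm{d}s,
\]
and the plan is to split $[0,d]$ into three intervals governed by Theorem~\ref{theo:probbopundroots}: one where the tail bound is trivially above $1$, one where it gives nontrivial exponential decay, and one where the theorem's hypothesis is no longer satisfied so that monotonicity plus the worst-case bound $\varrho(\fkf)\leq d$ takes over.

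Let $L := 2\lceil\lg d\rceil + 1$, $U := L\bigl(\ln(44\, d^2 L) + u(\fkf)\bigr)$, and $V := \tau(\fkf)\, L$. By construction $44 d^2 L\,\enumber^{u(\fkf)}\enumber^{-U/L} = 1$, so $U$ is exactly the threshold where Theorem~\ref{theo:probbopundroots} enters the informative regime, while $V$ is the largest value of $s$ to which that theorem applies. The hypothesis $\tau(\fkf)\geq 10\ln(\enumber d)+2u(\fkf)$, together with the crude estimate $L\leq 6\ln(\enumber d)$, gives $U\leq V$; the case $V>d$ only simplifies the argument by omitting the third interval, so we may assume $U\leq V\leq d$.

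On $[0,U]$ use $\bbP(\varrho(\fkf)\geq s)\leq 1$, contributing at most $U^\ell$. On $[U,V]$, Theorem~\ref{theo:probbopundroots} gives $\bbP(\varrho(\fkf)\geq s)\leq \enumber^{(U-s)/L}$; after the substitution $s=U+Lt$, the binomial expansion of $(U+Lt)^{\ell-1}$, integration against $\enumber^{-t}$ using $\int_0^\infty t^k\enumber^{-t}\,\mathrm{d}t = k!$, and the bound $\binom{\ell-1}{k}k!\leq\ell^{\ell-1}$ (exactly as in the middle step of Corollary~\ref{cor:realglobalexpectations}), this interval contributes a term of order $\ell^\ell L(U+L)^{\ell-1}$, which, using $U\geq L$, is of the same order as $U^\ell$ up to an $\Oh(\ell)$ factor. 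On $[V,d]$, by monotonicity of the tail, $\bbP(\varrho(\fkf)\geq s)\leq \bbP(\varrho(\fkf)\geq V)\leq 44\, d^2 L\,\enumber^{u(\fkf)-\tau(\fkf)}$, contributing at most $44\, d^{2+\ell}L\,\enumber^{u(\fkf)-\tau(\fkf)}$; weakening $\enumber^{-\tau(\fkf)}$ to $2^{-\tau(\fkf)}$ and absorbing $L\leq d$ into an extra factor of $d^{1+\ell}$ yields the second summand $44\, d^{3+2\ell}\,\enumber^{u(\fkf)}/2^{\tau(\fkf)}$ of the target.

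Summing the three contributions and applying the sub-additivity $(x+y+z)^{1/\ell}\leq x^{1/\ell}+y^{1/\ell}+z^{1/\ell}$, together with $\ln(44\, d^2 L)+u(\fkf)\leq 6\ln(\enumber d)+u(\fkf)$ to bound $U$, yields the first inequality. The \emph{in particular} clause follows by plugging $\tau(\fkf)\geq (9+3\ell)\ln(\enumber d)+2u(\fkf)$ into the second summand: this makes it $\Oh(1)$ and so absorbs it into the first. The main technical obstacle is the middle interval: the decay rate $1/L$ with $L=\Theta(\ln d)$ introduces extra factors of $L$ in the binomial expansion that must be tracked carefully and shown to be absorbed into the $\ln(\enumber d)$ factors of the target rather than compounding into spurious powers of $d$.
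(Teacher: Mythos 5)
Your proposal is correct and follows essentially the same route as the paper: split the integral $\int_0^d \ell s^{\ell-1}\bbP(\varrho(\fkf)\geq s)\,\mathrm{d}s$ at the point $U$ where the tail bound from Theorem~\ref{theo:probbopundroots} drops below $1$ and at the point $V$ where its hypothesis expires, bound the first piece trivially, the middle piece by the exponential tail via a binomial expansion and Gamma integrals, and the last piece by the worst case $\varrho(\fkf)\leq d$. The only difference is cosmetic: the paper first rescales the random variable to $\varrho(\fkf)/(2\lceil\lg d\rceil+1)$, normalizing the decay rate to $1$ so that it can quote verbatim the abstract inequality $\bbE(\min\{\fkx,c\})^\ell\leq U^\ell+\ell^\ell U^{\ell-1}+\enumber^{U-V}c^\ell$ extracted from the proof of Corollary~\ref{cor:realglobalexpectations}, whereas you work with $\varrho(\fkf)$ directly and track the factors of $L=2\lceil\lg d\rceil+1$ through the substitution $s=U+Lt$; both bookkeepings yield the same bound up to the harmless overestimates ($\enumber^{-\tau}\leq 2^{-\tau}$, $L\leq d^{1+\ell}$) you already flag.
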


\begin{proof}[Proof of Theorem~\ref{theo:probbopundroots}]
If $\#\left(\mcZ(\fkf)\cap \Omega_d\right)\geq t$, then, by
Theorem~\ref{theo:deterministicbopundroots}, there is an $n$ such that
$
\log (\enumber\|f\|_1/|\fkf(\xi_{n})|)\geq t/(2\lceil\lg d\rceil+1)
$.
Hence
\[
\bbP\left(\varrho(\fkf)\geq t\right)\leq 
\sum_{n=-\lceil\lg d\rceil}^{\lceil\lg d\rceil} \bbP\left(\lg
\frac{\enumber\|f\|_1}{|\fkf(\xi_{n})|}
\geq \frac{t}{2\lceil\lg d\rceil+1}\right).
\]
Now, fix $x\in I$. We argue as in the proof of
Theorem~\ref{theo:reallocalconditionnumberprob}, but we consider that map
mapping $f$ to $f(x)$ instead of the map mapping $f$ to $(f(x),f'(x)/d)$, so
that our matrix $A$ takes the form
\[\begin{pmatrix}
1&x&x^{d-1}&x^d
\end{pmatrix}.\]
Note that this $A$ has $\|A\|_{\infty,\infty}\leq d+1$. So, we can apply
Proposition~\ref{prop:reallinearprojectiondiscreterandomvector} to show that
for any $s\leq 2^{\tau(\fkf)}$,
\[
\bbP\left(\enumber\|\fkf\|_1/|\fkf(x)|\geq s\right)\leq
44d^2\enumber^{u(\fkf)}/s.
\]
If $s=\enumber^{t/N}$, 
with $N = 2\lceil\lg(d) \rceil +1$, then the bound follows.
\end{proof}
\begin{proof}[Proof of Corollary~\ref{cor:probboundroots}]
In the proof of Corollary~\ref{cor:realglobalexpectations}  we only used the
fact that the tail bound is of the form $U\enumber^{-t}$ for $t\leq V$ with
$U\leq V$. We will use a similar idea in this proof. Let $0\leq U\leq V$,
$c>0$, and $\fkx\in[0,\infty)$ a random variable. If $\bbP(\fkx\geq t)\leq
\enumber^{U-s}$ for $s\leq V$, then
$\bbE(\min\{\fkx,c\})^\ell\leq U^\ell+\ell^\ell
U^{\ell-1}+\enumber^{U-V}c^\ell$.

By Theorem~\ref{theo:probbopundroots}, the random variable
$\varrho(\fkf)/(2\lceil\lg d\rceil+1)$
satisfies the conditions to be a random variable $\fkx$ with
$U=\ln(44d^2(2\lceil\lg d\rceil+1)\enumber^{u(\fkf)})\leq 4\ln(\enumber d)+\ln
(2\lceil\lg d\rceil+1)+u(\fkf)$, $V=\ln(2^{\tau(\fkf)}/(2\lceil\lg
d\rceil+1))$, and $c=\frac{d}{(2\lceil\lg d\rceil+1)}$; since the roots are at
most $d$. By our assumptions $U\leq V$,
that~concludes~the~proof.
\end{proof}

\section{The Descartes solver}
\label{sec:Descartes}

The \descartes solver is an algorithm that is based on Descartes' rule of signs.

\begin{theo}[Descartes' rule of signs]
  \label{thm:Desc-rule-of-sign}
  The number of sign variations in the coefficients' list of a
  polynomial $f = \sum_{i=0}^df_i X^i \in \Pd$ equals the number of
  positive real roots (counting multiplicities) of $f$, say $r$,
  plus an even number; that is
  $r \equiv \var(f) \mod 2$.\eproof
\end{theo}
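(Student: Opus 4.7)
My plan is to prove Descartes' rule of signs in two layers: a multiplicative lemma that tracks how $\var$ changes under multiplication by a linear factor with positive root, followed by a sign-parity argument for the residual factor that has no positive roots.

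First, I would establish the core lemma: if $a > 0$ and $g(x) = (x-a)f(x)$, then $\var(g) \geq \var(f) + 1$ and $\var(g) \equiv \var(f) + 1 \pmod 2$. To see this, write the coefficient sequence of $f$ as $(f_0, f_1, \ldots, f_d)$ (dropping zero coefficients does not change $\var$) and inspect the coefficients $g_k = f_{k-1} - a f_k$. Between any two consecutive sign changes in $f$'s sequence, one may check (case analysis on the signs surrounding each maximal run of equal sign) that $g$ retains at least as many sign changes on the corresponding indices; and an extra sign change appears because the leading coefficient of $g$ is $f_d$ while the constant term $g_0 = -a f_0$ has sign opposite to that of $f_0$, which flips the overall sign parity of the endpoints of the coefficient list. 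This simultaneously gives the lower bound and the parity.

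Next, I would factor $f(x) = (x-a_1)^{m_1}\cdots(x-a_s)^{m_s}\, h(x)$, where $a_1, \ldots, a_s$ are the distinct positive roots of $f$ with multiplicities $m_i$, so $r = \sum_i m_i$, and $h \in \Pd$ has no positive real roots. Applying the lemma $r$ times in succession yields
\[
\var(f) \geq r + \var(h) \quad \text{and} \quad \var(f) \equiv r + \var(h) \pmod 2.
\]
So it remains only to prove that $\var(h)$ is even.

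For the final step, since $h$ has no positive real roots, $h$ does not vanish on $(0, \infty)$, hence $h(x)$ has a constant sign for $x > 0$. Comparing the limits $\lim_{x \to 0^+} h(x) = h_0$ (the first nonzero coefficient, after dividing out the power of $X$ that factors through $h$) and $\lim_{x \to +\infty} h(x)/x^{\deg h} = h_{\deg h}$, the leading and trailing nonzero coefficients of $h$ must share the same sign. But the parity of the number of sign variations in any finite sequence of nonzero reals equals $0$ precisely when the first and last terms have the same sign. Therefore $\var(h)$ is even, which combined with the displayed relations yields $\var(f) \equiv r \pmod 2$ and $\var(f) \geq r$, i.e.\ $\var(f) = r + 2m$ for some $m \geq 0$. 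The main obstacle is the careful bookkeeping in the multiplicative lemma when $f$ has zero coefficients interleaved with sign changes, but this is handled by collapsing zeros and tracking only the sign pattern of the nonzero entries.
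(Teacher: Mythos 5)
The paper states Descartes' rule of signs as a classical fact and gives no proof of its own, so there is no paper proof to compare against; what follows is an assessment of your argument on its merits. Your proposal is the standard textbook proof and is correct. The key multiplicative lemma---that multiplying $f$ by $(x-a)$ with $a>0$ raises $\var$ by at least one and flips its parity---iterated over all positive roots reduces the problem to showing that a polynomial $h$ with no positive real roots has $\var(h)$ even; that last step correctly follows from comparing the sign of $h$ near $0^{+}$ with the sign near $+\infty$, which forces the first and last nonzero coefficients of $h$ to agree in sign and hence makes the number of sign changes even. The one place you compress is the retention half of the lemma, namely $\var\bigl((x-a)f\bigr)\geq\var(f)$. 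The sketch you indicate is the right one: pick a representative index from each maximal same-sign block of $f$ (say, the last index $\ell$ of each block) and observe that $g_{\ell+1}=f_{\ell}-af_{\ell+1}$ inherits the sign of $f_{\ell}$, since $f_{\ell+1}$ is either zero or of opposite sign; together with $g_0=-af_0$ having the opposite sign to $f_0$, this produces $\var(f)+1$ alternations directly, which also recovers the parity claim. So there is no gap of substance, only bookkeeping left implicit; and the final assembly, that $\var(f)\geq r$ with $\var(f)\equiv r\pmod 2$, i.e.\ $\var(f)=r+2m$ with $m\geq 0$, follows as you state.
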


\begin{algorithm2e}[t]
  \scriptsize \dontprintsemicolon \linesnumbered
  \SetFuncSty{textsc} \SetKw{RET}{{\sc return}} \SetKw{OUT}{{\sc output \ }}
  \SetVline \KwIn{A square-free polynomial $f\in \Pd^\bbZ$}
\KwOut{A list, $S$, of isolating intervals for the real roots of $f$ in $J_0 =
(-1, 1)$}

  \BlankLine

  $J_0 \leftarrow (-1, 1)$, 
  $S \leftarrow \emptyset,\, Q \leftarrow \emptyset$,
  $Q \leftarrow \FuncSty{push}( {J_0})$ \;

  \While{ $Q \neq \emptyset$}{
    \nllabel{alg:Subdivision-while-loop}

    ${J} \leftarrow \FuncSty{pop}( Q)$
    \tcp*{Assume that $J = (a, b)$.}
    $V \leftarrow \var(f, J)$ \;

    \Switch{ $V$ }{

      \lCase{ $V = 0$ }{ \KwSty{continue}\; }
      \lCase{ $V = 1$ }{ $S \leftarrow \FuncSty{ add}( {I})$ \; }
      \Case{ $V > 1$ } {
        $m \gets \frac{a+b}{2}$ \;
        \lIf{$f(m) = 0$}{  $S \leftarrow \FuncSty{ add}( {[m, m]})$ \; }
        $J_L \gets [a, m]$ ;   $J_R \gets [m, b]$ \;

        $Q \leftarrow \FuncSty{push}( Q, {J_L} )$,
        $Q \leftarrow \FuncSty{push}( Q, {J_R} )$ \;
      }
    }
  }
  \RET $S$ \;
  \caption{ $\func{Descartes}(f)$}
  \label{alg:Descartes}
\end{algorithm2e}

In general, Theorem~\ref{thm:Desc-rule-of-sign} provides an
overestimation on the number of positive real roots. It counts exactly
when the number of sign variations is 0 or 1
and if the polynomial is hyperbolic, that is it has \emph{only} real roots.
%
To count the real roots of $f$ in an interval
$J = (a, b)$
we use the transformation $x \mapsto \frac{ax + b}{x + 1}$
that maps $J$ to $(0, \infty)$.
Then
\[\var(f, J) := \var( (X + 1)^d f(\tfrac{aX + b}{X + 1}) )\]
bounds the number of real roots of $f$ in $I=J$.

Therefore, to isolate the real roots of $f$ in an interval, say  $J_0 = (-1,
1)$,
we count (actually bound) the number of roots of $f$ in $J_0$ using $V =
\var(f, J_0)$.
If $V = 0$, then we discard the interval.
If $V = 1$, then we add $J_0$ to the list of isolating intervals.
If $V > 1$, then we subdivide the interval to two intervals $J_L$ and $J_R$
and we repeat the process.
The pseudo-code of \descartes appears in Algorithm~\ref{alg:Descartes}.

The recursive process of the \descartes defines a binary tree. Every node of the
tree corresponds to an interval. The root corresponds to the initial interval
$J_0=(-1, 1)$. If a node corresponds to an interval $J=(a,b)$, then its
children correspond to the open left and right half intervals of $J$, that is
$J_L = (a, \xmid(J))$ and $J_R = (\xmid(J), b)$ respectively.
The internal nodes of the tree correspond to intervals $J$, such that
$\var(f,J) \geq 2$.
The leafs correspond to intervals that contain 0 or 1 real roots of $f$. 
Overall, the number of nodes of the tree correspond to the number of
steps, i.e., subdivisions, that the algorithm performs.  We control the number
of nodes by controlling
the depth of tree and the width of every layer. Hence, to obtain the
final complexity estimate it suffices to multiple the number of steps (width
times height) with the
worst case cost of each step.

The following proposition helps to control the cost of each step. Note that at
each step, we do changes of variables to obtain the desired polynomial to
perform the sign count.

\begin{prop}\label{prop:polymaps}
  \label{prop:poly-maps}
  Let $f=\sum_{i=0}^df_iX^i\in\Pd^\bbZ$ of bit-size $\tau$.
  \begin{itemize}[leftmargin=*]
    \item The reciprocal transformation is
      $R(f): = X^{d} f(\tfrac{1}{X}) = \sum_{k=0}^{d}f_{d-k}X^k$. Its cost is
      $\OB(1)$ and it does not alter neither the degree nor the bit-size of the
      polynomial.
    \item The homothetic transformation of $f$ by $2^k$, for a positive integer
      $k$, is
$H_k(f) = 2^{d k} f(\tfrac{X}{2^k}) = \sum_{i=0}^d{2^{k(d-i)} f_i \, X^{i}}$.
      It costs $\OB(d \, \mu(\tau + d k)) = \sOB(d\tau + d^2 k)$ and the
      resulting polynomial has bit-size $\OO(\tau + d k)$. Notice that
      $H_{-k} = R H_{k} R$.

\item The Taylor shift of $f$ by in integer $c$ is $T_c(f) = f(x+c) =
\sum_{k=0}^d{a_k x^k}$,
where $a_i = \sum_{j=i}^{d}\binom{j}{i}f_j c^{j-i}$ for $0 \leq i \leq d$. It
costs $\OB(\mu(d^2\sigma + d\tau)\lg{d}) = \sOB(d^2\sigma + d\tau)$
     \cite[Corollary~2.5]{vzGGer}, where $\sigma$ is the bit-size of $c$. The
     resulting polynomial has bit-size $\OO(\tau + d\sigma)$.\eproof
  \end{itemize}
\end{prop}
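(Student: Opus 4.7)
The plan is to handle each of the three transformations in turn using the explicit coefficient formulas recorded in the statement; the first two are essentially elementary, and the third reduces to a known fast algorithm of von zur Gathen and Gerhard.

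First I would dispatch the reciprocal $R(f)=\sum_{k=0}^d f_{d-k}X^k$: the operation merely reverses the coefficient list, so it can be implemented by relabeling indices or by a single in-place traversal. Thus the cost is $\OB(1)$ per coefficient access, and since the multiset of coefficients is preserved, the degree and bit-size are unchanged.

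Next I would address the homothety $H_k(f)=\sum_{i=0}^d 2^{k(d-i)}f_iX^i$. The $i$-th coefficient is $f_i$ (of bit-size at most $\tau$) left-shifted by $k(d-i)\le kd$ bits, producing an integer of bit-size $\OO(\tau+kd)$ at a cost of $\OB(\mu(\tau+kd))$. Summing over the $d+1$ coefficients gives the claimed $\OB(d\,\mu(\tau+kd))=\sOB(d\tau+d^2k)$ bound, and the bit-size of the output polynomial is immediate. The identity $H_{-k}=RH_kR$ follows by direct substitution into the definitions.

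The main work is the Taylor shift $T_c(f)=f(X+c)$, where a naive expansion would cost $\Theta(d^2)$ large-integer multiplications. The bit-size of the output is elementary from the explicit formula $a_i=\sum_{j=i}^d\binom{j}{i}f_jc^{j-i}$: each summand has magnitude at most $2^d\cdot 2^\tau\cdot 2^{(d-i)\sigma}$, and summing $d+1$ such terms adds only $\OO(\lg d)$ to the bit-size, yielding the claimed $\OO(\tau+d\sigma)$ bound. For the cost, instead of expanding directly I would invoke the divide-and-conquer Taylor shift of von zur Gathen and Gerhard (Corollary~2.5), which delivers the stated $\OB(\mu(d^2\sigma+d\tau)\lg d)=\sOB(d^2\sigma+d\tau)$ bound. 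This is the only step that is not routine bookkeeping; once the von zur Gathen--Gerhard result is cited, the proposition is complete.
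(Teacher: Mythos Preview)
Your proposal is correct and matches exactly what the paper has in mind: the paper in fact gives no proof at all---the proposition is stated with a terminal \textsf{qed} and the citation to von zur Gathen--Gerhard already embedded in the statement---so your routine bookkeeping for $R$ and $H_k$ together with the cited Corollary~2.5 for $T_c$ is precisely the intended (omitted) justification.
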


\begin{remark}
  There is no restriction on working with open intervals since we
consider an integer polynomial and we can always evaluate it at the endpoints.
  Also to isolate all the real roots of $f$ it suffices to have a routine to
  isolate the real roots in $(-1, 1)$. Using the map $x \mapsto 1/x$ we can
  isolate the roots in $(-\infty, -1)$ and  $(1, \infty)$.
\end{remark}
\subsection{Bounds on the number of sign variations}
\label{sec:Desc-termination}

For this subsection we consider $f=\sum_{i=0}^df_iX^i\in\Pd$
to be a polynomial with real coefficients, not necessarily integers.
To establish the termination and estimate the bit complexity of \descartes
we need to introduce the Obreshkoff area and lens.
Our presentation follows closely \cite{sm-anewdsc,KM-newDesc-06,emt-lncs-2006}.

Consider $0 \leq \varrho \leq d$ and a real open interval $J = (a, b)$. The
\emph{Obreshkoff discs} $\uObD_{\varrho} $ and $\dObD_{\varrho}$ are discs the
boundaries of which go through the endpoints of $J$. Their centers are above,
respectively below, $J$ and they form an angle $\varphi = \frac{\pi}{\varrho+2}$
with the endpoints of $I$. Its diameter is
$\wid(J)/ \sin(\frac{\pi}{\varrho+2})$.

The \emph{Obreshkoff area} is $\ObA_{\varrho}(J) =
\mathsf{interior}(\uObD_{\varrho} \,\cup\, \dObD_{\varrho})$;
it appears with grey color in Fig.~\ref{fig:Obreshkoff}.
The \emph{Obreshkoff lens} is $\ObL_{\varrho}(J) =
\mathsf{interior}(\uObD_{\varrho} \,\cap\, \dObD_{\varrho})$;
it appears in light-grey color in Fig.~\ref{fig:Obreshkoff}.
If it is clear from the context, then we omit $I$ and we write $\ObA_{\varrho}$
and  $\ObL_{\varrho}$,
instead of $\ObA_{\varrho}(J)$ and $\ObL_{\varrho}(J)$.
It holds that
$\ObL_d \subset \ObL_{d-1} \subset \cdots \subset \ObL_1 \subset \ObL_0$
and
$\ObA_0 \subset \ObA_1 \subset \cdots \subset \ObA_{d-1} \subset \ObA_d$.

\begin{figure}[t]
  \centering
  \includegraphics[scale=0.4]{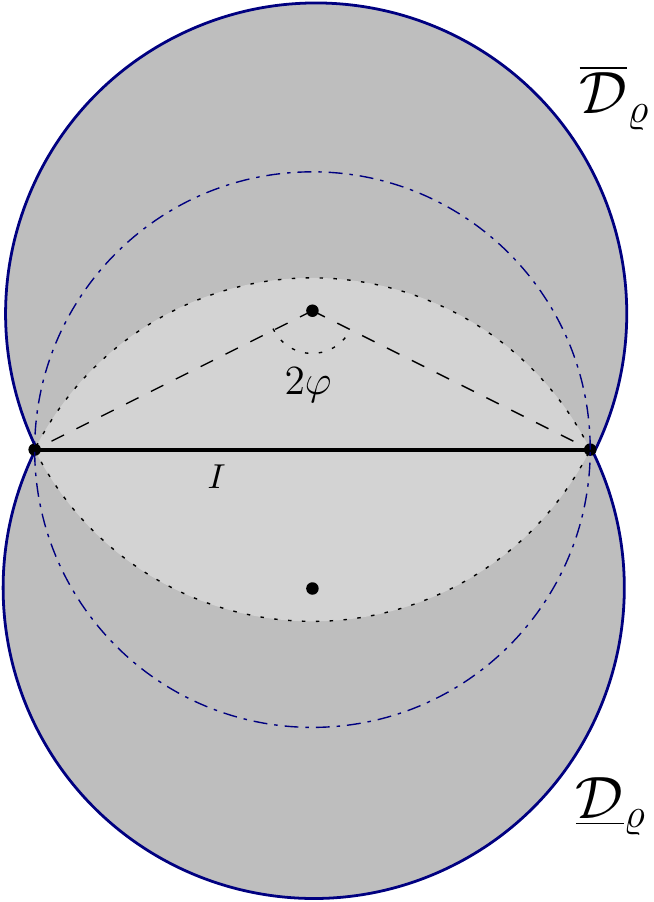}
\caption{Obreshkoff discs, lens (light grey), and area  (light grey and grey)
for an interval $I$.}
  \label{fig:Obreshkoff}
\end{figure}

The following theorem shows the role of the number of complex roots in the
control of the number of variation signs.

\begin{theo}[\cite{Obreshkoff-book}]
  \label{thm:Obr-signs}
  Consider $f\in\Pd$ 
  and real open interval $J=(a,b)$.
If the Obreshkoff lens $\ObL_{d - k}$ contains at least $k$ roots (counted with
multiplicity) of $f$,
  then $k \leq \var(f,J)$.
If the Obreshkoff area $\ObA_{k}$ contains at most $k$ roots (counted with
multiplicity) of $f$,
  then $\var(f,J) \leq k$.
  Especially
  \begin{equation*}\tag*{\qed}
    \# \{ \text{roots of }f\text{ in } \ObL_d \}
    \le  \var(f,J) \le
    \# \{ \text{roots of }f\text{ in } \ObA_d \}.
  \end{equation*}
\end{theo}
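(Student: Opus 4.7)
The plan is to reduce Theorem~\ref{thm:Obr-signs} to a classical theorem on sign variations of a polynomial on the positive real axis via the Möbius transformation that is already built into the definition of $\var(f,J)$. First I set up the transformation $M(X) = \frac{aX+b}{X+1}$, which sends the positive real axis bijectively onto $J$, with $0 \mapsto b$ and $\infty \mapsto a$, and I define $g(X) := (X+1)^d f(M(X)) \in \Pd$. By the very definition of $\var(f,J)$ stated before Algorithm~\ref{alg:Descartes}, we have $\var(g) = \var(f,J)$. Since $M$ extends to a homeomorphism of the Riemann sphere, the complex roots of $f$ (away from $a$) and of $g$ (away from $-1$) are in multiplicity-preserving bijection through $M$.

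Second, I would analyze how $M^{-1}$ transforms the Obreshkoff regions. Because the boundary circles of $\uObD_m$ and $\dObD_m$ both pass through $a$ and $b$, their preimages under $M^{-1}$ are circles through $\infty$ and $0$, i.e., straight lines through the origin. Since Möbius maps preserve angles between curves, these lines form angles $\pm \pi/(m+2)$ with the positive real axis. Consequently $M^{-1}(\ObL_m)$ equals the symmetric open sector $S_m := \{z \in \bbC : |\arg z| < \pi/(m+2)\}$ around the positive real axis, and $M^{-1}(\ObA_m)$ equals $A_m := \{z \in \bbC : |\arg z - \pi| > \pi/(m+2)\}$, the complement of the corresponding sector about the negative real axis. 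Under $M$, ``roots of $f$ in $\ObL_m$'' and ``roots of $g$ in $S_m$'' refer to the same set (counted with multiplicity), and similarly for $\ObA_m$ and $A_m$.

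Third, the theorem reduces to two classical sign-variation bounds for the polynomial $g$: if at least $k$ roots of $g$ lie in $S_{d-k}$ then $\var(g) \geq k$, and if at most $k$ roots of $g$ lie in $A_k$ then $\var(g) \leq k$. I would prove both by induction on $\deg g$, factoring $g$ over $\bbR$ into real linear factors and conjugate-pair real quadratic factors, and tracking how $\var$ changes under multiplication by each factor. The two key tools are the Budan--Fourier lemma (dividing by $(X-\alpha)$ with $\alpha > 0$ reduces $\var$ by an odd positive integer, and by $(X+\alpha)$ with $\alpha > 0$ by an even non-negative integer) together with an angular lemma: a real quadratic factor $(X-\alpha)(X-\bar\alpha) = X^2 - 2\re(\alpha) X + |\alpha|^2$ with $\alpha$ strictly outside the sector of half-angle $\pi/(m+2)$ about the negative real axis contributes at most two to $\var$ when multiplied into a polynomial of degree at most $m$.

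The main obstacle is the angular lemma with its sharp constant $\pi/(m+2)$; this is the genuine content of Obreshkoff's theorem. The strategy is to write the quadratic factor as $X^2 - 2|\alpha|\cos\theta\, X + |\alpha|^2$, express the coefficient sequence of the product as a three-term convolution of the existing one, and show by a Schur--Cohn / Laguerre-type computation on discrete second differences that a forbidden third sign change is possible only once $|\theta|$ crosses the critical angle $\pi/(m+2)$. Once this lemma and its dual for roots inside $S_{d-k}$ are established, summing the contributions of all real and complex factors of $g$ yields the two claimed inequalities; transporting them back through $M$ gives Theorem~\ref{thm:Obr-signs}. The ``especially'' part then follows by plugging in $k = \#\{\text{roots of } f \text{ in } \ObL_d\}$ (using the monotonicity $\ObL_d \subset \ObL_{d-k}$) and $k = \#\{\text{roots of } f \text{ in } \ObA_d\}$ (using $\ObA_k \subset \ObA_d$), respectively.
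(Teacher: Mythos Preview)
The paper does not prove this theorem at all: it is stated with a citation to Obreshkoff's monograph and closed with \eproof, and the text immediately continues with ``This theorem together with\ldots''. So there is no ``paper's own proof'' to compare against; the result is imported wholesale from the literature.

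Your reduction is the standard one and is correct: the M\"obius map $M$ built into the definition of $\var(f,J)$ sends $J$ to $(0,\infty)$, carries the Obreshkoff circles through $a,b$ to straight lines through $0$ and $\infty$, and conformality gives exactly the sector pictures $S_m$ and $A_m$ you describe. The ``especially'' clause also follows from the two main inequalities by the monotonicity you indicate.

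The genuine gap is the angular lemma itself. You correctly flag it as ``the main obstacle'' and ``the genuine content of Obreshkoff's theorem'', but then your proof of it is only the phrase ``a Schur--Cohn / Laguerre-type computation on discrete second differences'', which is not an argument. Two concrete issues: first, your formulation ties the critical angle $\pi/(m+2)$ to the \emph{degree} $m$ of the polynomial being multiplied, whereas in the actual inductive proof the relevant parameter is the current number of sign variations (or, equivalently, the number $k$ of roots already placed in $A_k$); the sector stays fixed at half-angle $\pi/(k+2)$ throughout the induction, and one shows that multiplying by a linear or quadratic factor with root(s) in that fixed sector does not increase $\var$ beyond $k$. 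Second, the Budan--Fourier statements you quote for real linear factors are not by themselves enough to run the induction for complex pairs near the negative axis---multiplying by an all-positive-coefficient quadratic \emph{can} increase $\var$ by $2$ in general, and it is precisely the sharp angular threshold that prevents this when $\var$ is already at $k$. So as written, the proposal is an accurate roadmap with the central room left unbuilt; to complete it you would need to actually carry out Obreshkoff's coefficient computation (or cite it), not merely name its genre.
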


This theorem together with the subadditive property
of Descartes' rule of signs (Thm.~\ref{thm:desc-subadd})
shows that the number of complex roots in the Obreshkoff areas
controls the width of the subdivision tree of \descartes.

\begin{theo}
  \label{thm:desc-subadd}
  Consider a real polynomial $f\in\Pd$.
Let $J$ be a real interval and $J_1, \dots, J_n$ be disjoint open subintervals
of $J$.
  Then, it holds $\sum_{i=1}^n \var(f, J_i) \leq \var(f,J)$.\eproof
\end{theo}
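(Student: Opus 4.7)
The plan is to reduce the claim to a single bisection inequality and prove the latter via the variation-diminishing property that underlies $\var(f, J)$. First I would argue by induction on $n$. For $n = 1$ the statement is monotonicity: $J_1 \subseteq J$ implies $\var(f, J_1) \leq \var(f, J)$, which in turn is a consequence of the bisection case applied once or twice to cut off the portion of $J$ outside $J_1$. For the inductive step, order $J_1, \dots, J_n$ by position in $J = (a, b)$ and pick a point $c$ with $\sup J_1 \leq c \leq \inf J_2$. Then $J = (a, c) \sqcup \{c\} \sqcup (c, b)$, and combining the bisection inequality $\var(f, (a,c)) + \var(f, (c,b)) \leq \var(f, J)$ with the monotonicity bound $\var(f, J_1) \leq \var(f, (a,c))$ and the inductive hypothesis $\sum_{i \geq 2} \var(f, J_i) \leq \var(f, (c,b))$ closes the induction.

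So the content reduces to the bisection inequality: for every $c \in (a, b)$,
\[
\var(f, (a, c)) + \var(f, (c, b)) \leq \var(f, (a, b)) .
\]
Setting $p(X) := (X+1)^d f\bigl(\tfrac{aX + b}{X + 1}\bigr)$, so that $\var(f, (a,b)) = \var(p)$, the split at $c$ corresponds to a split of $(0, \infty)$ in the $X$-coordinate at some $\gamma > 0$. After the positive rescaling $X \mapsto \gamma X$, which preserves the sign-variation count, the inequality becomes the classical polynomial statement: for every $p \in \Pd$,
\[
\var\bigl(p(X+1)\bigr) \;+\; \var\bigl((X+1)^d p(\tfrac{X}{X+1})\bigr) \;\leq\; \var(p) ,
\]
since positive real roots of $p(X+1)$ are in bijection with roots of $p$ in $(1, \infty)$ and those of $(X+1)^d p(X/(X+1))$ with roots of $p$ in $(0, 1)$, the two sides of the split.

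The polynomial inequality is a classical variation-diminishing statement used throughout the analysis of Descartes-type solvers (see e.g.~\cite{Obreshkoff-book, KM-newDesc-06}). The cleanest proof interprets the coefficient vectors in the Bernstein basis: the coefficients of the two transformed polynomials are precisely the Bernstein coefficients of $p$ on the two halves $[0, 1/2]$ and $[1/2, 1]$, obtained from the Bernstein coefficients of $p$ on $[0, 1]$ by the de Casteljau subdivision algorithm. De Casteljau is a finite composition of midpoint convex combinations $(c_i) \mapsto \bigl((c_i + c_{i+1})/2\bigr)$, and each such step is variation-diminishing by an elementary sign-pattern argument. Iterating and concatenating the two resulting sequences (the shared middle coefficient, corresponding to the split point $c$, being counted at most once) yields the stated bound.

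The main obstacle is the careful bookkeeping in the variation-diminishing step: one must verify that a sign change created in the concatenated subdivided sequences can always be traced to a distinct sign change in the original, and that the shared endpoint value does not create a spurious extra variation. In the open-interval formulation of the theorem this is harmless, since $f(c)$ contributes to neither $\var(f, (a,c))$ nor $\var(f, (c,b))$. An alternative route that avoids Bernstein bases is a direct induction on $\deg p$, using the explicit identity $(X+1)^d p(X/(X+1)) = \sum_k p_k X^k (X+1)^{d-k}$ to propagate sign-change accounting through both transformations; either way, the heart of the argument is the variation-diminishing nature of a positive, totally-nonnegative linear transformation on the coefficient vector.
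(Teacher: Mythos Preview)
The paper does not prove this theorem: it is stated with the \texttt{\textbackslash eproof} marker and treated as a classical fact about Descartes' rule of signs, with no argument given. Your proposal therefore supplies strictly more than the paper does.

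Your route is the standard one in the literature (see, e.g., the Bernstein-basis treatments underlying~\cite{KM-newDesc-06}): reduce subadditivity to a single bisection inequality by induction, then identify the monomial coefficients of $(X+1)^d f\bigl(\tfrac{aX+b}{X+1}\bigr)$ with the Bernstein coefficients of $f$ on $[a,b]$ (up to positive binomial factors), and invoke the variation-diminishing property of the de~Casteljau step. One small slip: in your Bernstein paragraph you write ``Bernstein coefficients of $p$ on $[0,1/2]$ and $[1/2,1]$''; what you mean is the Bernstein coefficients of $f$ on the two halves of $[a,b]$ (equivalently, the two halves of the de~Casteljau triangle built from the coefficient vector of $p$). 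Also, the shared middle value $b_0^{(d)}$ does appear in both subdivided sequences, and the correct accounting is that the concatenation with that element counted once has exactly $\var(\text{left})+\var(\text{right})$ sign changes, which is then bounded by $\var$ of the original via the variation-diminishing lemma; your open-interval remark about $f(c)$ is a red herring here, since $\var(f,J)$ is defined through the coefficient sequence, not through evaluations at endpoints. These are cosmetic fixes; the argument is sound.
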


Finally, to control the depth of the subdivision tree of \descartes we use the
one and two circle theorem~\cite{AleGalu-vincent-98,KM-newDesc-06}.
We present a variant
based on the $\varepsilon$-real separation of $f$, $\Delta_{\varepsilon}\uR(f)$
(Definition~\ref{defi:realseparation}).

\begin{theo}\label{thm:sepdepthbound}
Let $f\in\Pd$, an interval $J\subseteq (-1,1)$ and $\varepsilon>0$. If
\[
2 \, \wid(J)\leq \min\{\Delta_{\varepsilon}\uR(f),\varepsilon\} ,
\]
then $\var(f, J)=0$ (and $J$ does not contain any real root), or $\var(f, J)=1$
(and $J$ contains exactly one real root).
\end{theo}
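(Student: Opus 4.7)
The plan is to apply Theorem~\ref{thm:Obr-signs} with the parameter $k=1$: I would show that the Obreshkoff area $\ObA_1(J)$ contains at most one root of $f$, whence $\var(f,J)\le 1$. Descartes' rule of signs (Theorem~\ref{thm:Desc-rule-of-sign}), applied after the M\"obius transformation that sends $J$ to $(0,\infty)$, then upgrades this bound to the exact dichotomy in the statement: either $\var(f,J)=0$ and $J$ has no real root, or $\var(f,J)=1$ and $J$ has exactly one real root. So the whole argument reduces to a single geometric claim about $\ObA_1(J)$ combined with Definition~\ref{defi:realseparation} of $\Delta_\varepsilon^\bbR(f)$.

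For $k=1$ the Obreshkoff angle is $\varphi=\pi/3$, so the two discs whose union is $\ObA_1(J)$ have radius $r=\wid(J)/(2\sin\varphi)=\wid(J)/\sqrt{3}$ and centers at height $\pm h=\pm(\wid(J)/2)\cot\varphi=\pm\wid(J)/(2\sqrt{3})$ above and below $\xmid(J)$. A direct computation gives $\mathrm{diam}(\ObA_1(J))=2(h+r)=\sqrt{3}\,\wid(J)\le 2\wid(J)$, so by the hypothesis $2\wid(J)\le\Delta_\varepsilon^\bbR(f)$ any two points of $\ObA_1(J)$ lie strictly within $\Delta_\varepsilon^\bbR(f)$ of each other. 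Once I also know $\ObA_1(J)\subseteq I_\varepsilon$, Definition~\ref{defi:realseparation} forces $\ObA_1(J)$ to contain at most one zero of $f$, since two distinct zeros inside $I_\varepsilon$ would have to sit at distance at least $\Delta_\varepsilon^\bbR(f)$ apart.

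The real work is therefore the containment $\ObA_1(J)\subseteq I_\varepsilon$ obtained from the other half of the hypothesis, $\wid(J)\le\varepsilon/2$. For $(x,y)\in\ObA_1(J)$ with $x\in[-1,1]$ one has $\dist((x,y),I)=|y|\le h+r=\sqrt{3}\wid(J)/2<\varepsilon$. In the corner case where $J$ abuts $\pm 1$, the worst-case point satisfies $|x|-1\le r=\wid(J)/\sqrt{3}$ and $|y|\le\sqrt{3}\wid(J)/2$, giving $\dist((x,y),I)^2\le(\tfrac{1}{3}+\tfrac{3}{4})\wid(J)^2=\tfrac{13}{12}\wid(J)^2$, so $\dist((x,y),I)\le\sqrt{13/12}\,\varepsilon/2<\varepsilon$. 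The main obstacle is really just being careful with this geometry near the endpoints of $I$; the constant $2$ in $2\wid(J)\le\varepsilon$ is tuned exactly so that the containment holds in all positions of $J$ inside $(-1,1)$. With $\ObA_1(J)\subseteq I_\varepsilon$ in hand, Theorem~\ref{thm:Obr-signs} yields $\var(f,J)\le\#(\mcZ(f)\cap\ObA_1(J))\le 1$, and the Descartes step above completes the proof.
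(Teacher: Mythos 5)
Your proof is correct and follows essentially the same route as the paper, which delegates the details to \cite[Proposition~6.4]{TCTcubeI-journal}: the key is the Obreshkoff/one-and-two-circle geometry, using the separation half of the hypothesis to get at most one root (with multiplicity, since $2\wid(J)\le\Delta_\varepsilon^{\bbR}(f)$ with $\wid(J)>0$ forces $\Delta_\varepsilon^{\bbR}(f)>0$, so there are no multiple roots in $I_\varepsilon$) in $\ObA_1(J)$, the $\varepsilon$ half to get $\ObA_1(J)\subseteq I_\varepsilon$, and then Theorem~\ref{thm:Obr-signs} with $k=1$ plus parity from Descartes' rule. Your corner-case bound $\dist^2\le r^2+(h+r)^2$ is conservative (the rightmost point of $\ObA_1(J)$ has small $|y|$, so the true bound is $(h+r)^2$), but it is valid and the margin is ample, so this does not affect correctness.
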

\begin{proof}
The proof follows the same application of the one and two circle theorems as in
the proof of \cite[Proposition~6.4]{TCTcubeI-journal}.
\end{proof}

\subsection{Complexity estimates for \descartes}
\label{sec:Descartes-complexity-exp}
We give a high-level overview of the proof ideas of this section before going
into technical details. The process of \descartes corresponds to a binary tree
and we control its depth using the real condition number and
Theorems~\ref{theo:condbasedseparation} and~\ref{thm:sepdepthbound}.
To bound the width of the \descartes' tree we use the Obreskoff areas and the
number of complex roots in them (Theorem~\ref{thm:Obr-signs}).
By combining these two bounds, we control the size of the tree and so we obtain
an instance-based complexity estimate. To turn this instance-based complexity
estimate into an expected one, we use
Theorems~\ref{theo:realglobalconditionnumberprob}
and~\ref{theo:probbopundroots} (and their
Corollaries~\ref{cor:realglobalexpectations} and \ref{cor:probboundroots}).

\subsubsection{Instance-based estimates}
\begin{theo}
 \label{thm:Descartes-steps}
If $f\in\Pd^\bbZ$, then,  using  
\descartes,  the number of subdivision steps to isolate the real roots in $I =
(-1, 1)$
	is 	 
	\[
	\sO(\varrho(f)^2\lg(\condR(f)).
	\] 
The bit complexity of the algorithm is
  \[
    \sOB(d \tau \varrho(f)^2 \lg \condR(f)+ d^2 \varrho(f)^2 \lg^2\condR(f)).
  \]
\end{theo}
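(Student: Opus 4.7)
The goal is to bound three quantities — the depth of \descartes' subdivision tree, its total number of nodes, and the bit-cost of processing each node — and then multiply. The three ingredients are the condition-based separation (Theorems~\ref{theo:condbasedseparation} and~\ref{thm:sepdepthbound}), the Obreshkoff area count (Theorem~\ref{thm:Obr-signs}) combined with the complex-root count $\varrho(f)$ developed in Section~\ref{sec:complexroots}, and the transformation cost of Proposition~\ref{prop:poly-maps}.

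\textbf{Depth.} For the depth, I would take $\varepsilon := 1/(\enumber\, d\, \condR(f))$, so that Theorem~\ref{theo:condbasedseparation} gives $\Delta\uR_{\varepsilon}(f) \geq 1/(12\, d\, \condR(f))$. Theorem~\ref{thm:sepdepthbound} then shows that any subdivision interval $J \subseteq (-1,1)$ with $2\wid(J) \leq \min\{\Delta\uR_{\varepsilon}(f),\varepsilon\}$ satisfies $\var(f,J) \leq 1$ and is therefore a leaf. Since each subdivision step halves $\wid(J)$ and the root interval has width $2$, the depth of the tree is $O(\lg(d\, \condR(f))) = \sO(\lg \condR(f))$.

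\textbf{Number of subdivision steps via Obreshkoff and $\varrho$.} Every internal node $J$ satisfies $\var(f,J) \geq 2$, so Theorem~\ref{thm:Obr-signs} forces the Obreshkoff area $\ObA_d(J)$ to contain at least two roots of $f$ (counted with multiplicity). The dyadic family $\{\bbD(\xi_n,\rho_n)\}$ of~\eqref{eq:Disc-center}--\eqref{eq:Disc-radius} is tailored so that $\ObA_d(J) \subseteq \Omega_d$ for every subdivision interval $J$ appearing in the tree; hence the witnessing roots lie in $\mcZ(f)\cap\Omega_d$, a set of size $\varrho(f)$. Combining this containment with the subadditivity of Theorem~\ref{thm:desc-subadd} and a bound on how often a single complex root can be ``charged'' by Obreshkoff areas at the same depth yields $\sO(\varrho(f)^2 \lg \condR(f))$ for the total number of subdivision steps. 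The main technical obstacle sits here: uniformly verifying the geometric inclusion $\ObA_d(J)\subseteq \Omega_d$ for every $J$ in the tree, and carefully controlling the overlap multiplicity of Obreshkoff regions at a common depth. This is exactly the step where the dyadic placement of the centers $\xi_n$ toward $\pm 1$ plays its role — a naive overestimate would cost an extra factor of $d$ and break the quasi-linearity of the final bound.

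\textbf{Per-step cost and conclusion.} At a node of depth $k$, producing the polynomial used to compute $\var(f, J)$ reduces — via a homothety $H_k$ and a Taylor shift $T_c$ with $c$ of bitsize $O(k)$, together with a reciprocal — to counting sign changes in a polynomial of degree $d$. By Proposition~\ref{prop:poly-maps}, this costs $\sOB(d\tau + d^2 k)$, which dominates the sign-change count. Since $k = \sO(\lg \condR(f))$ throughout the tree, each step costs at most $\sOB(d\tau + d^2 \lg\condR(f))$. Multiplying this per-step cost by the $\sO(\varrho(f)^2 \lg\condR(f))$ bound on the number of steps from the previous paragraph yields the claimed bit complexity $\sOB(d\tau\,\varrho(f)^2 \lg\condR(f) + d^2\,\varrho(f)^2 \lg^2 \condR(f))$.
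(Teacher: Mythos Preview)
Your depth bound and per-step cost are fine and match the paper exactly. The gap is in the width argument, and it is a real one.

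You assert that $\ObA_d(J)\subseteq\Omega_d$ for every interval $J$ in the \descartes tree. This is false. The Obreshkoff area $\ObA_k(J)$ has diameter $\wid(J)/\sin\frac{\pi}{k+2}$, so $\ObA_d(J)$ has diameter of order $d\cdot\wid(J)$. For the root interval $J_0=(-1,1)$ (and for many levels below it) this extends a distance of order $d$ from the real axis, whereas $\Omega_d$ is only a thin neighborhood of $I$ of width $O(1)$ (indeed $O(1/d)$ near $\pm 1$). So the containment does not hold for the upper part of the tree, and your ``charging'' sketch has no footing there.

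The paper's argument avoids this by working with $\ObA_{\varrho}(J)$ (parameter $\varrho=\varrho(f)$, not $d$), which has diameter of order $\varrho\cdot\wid(J)$, and by \emph{not} trying to contain the Obreshkoff areas from the start. Instead it first reaches each dyadic piece $J_n=[\xi_n,\xi_{n+1}]$ via a tree of size $O(\lg^2 d)$, then subdivides each $J_n$ a further $O(\lg\varrho)$ levels into $O(\varrho)$ subintervals $J_{n,\ell}$ chosen so that $\ObA_{\varrho}(J_{n,\ell})\subseteq D_{n+1}\subseteq\Omega_d$. Only at that point does Theorem~\ref{thm:Obr-signs} give $\var(f,J_{n,\ell})\leq\varrho$, and then subadditivity (Theorem~\ref{thm:desc-subadd}) bounds the width of the \descartes tree below each $J_{n,\ell}$ by $\varrho$. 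The two factors of $\varrho$ --- one from the number of $J_{n,\ell}$ per $J_n$, one from $\var\leq\varrho$ --- give the $\varrho(f)^2$ in the width bound. Your proposal collapses these two distinct phases into a single incorrect geometric claim.
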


Recall that $\condR(f)$ appears in~\eqref{eq:condR} and $\varrho(f)$
in~\eqref{eq:Omega_N}.

\begin{figure}[!tbp]
  \centering
  \begin{subfigure}[b]{0.2\textwidth}
    \centering
    \includegraphics[scale=0.55]{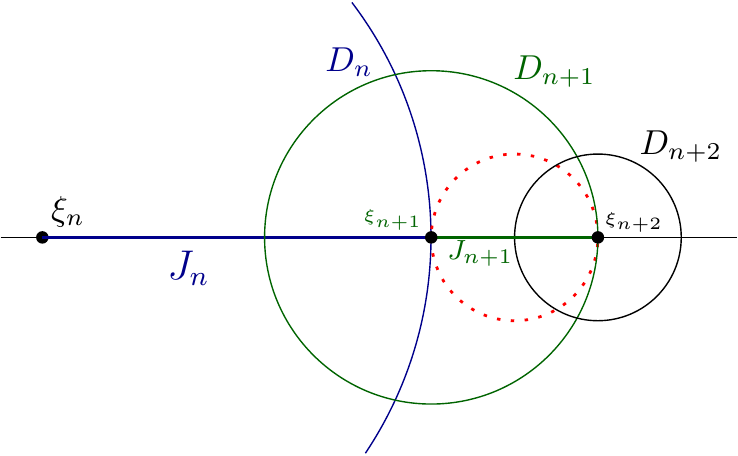}
    
        \label{fig:covering-discs}
  \end{subfigure}%
  \hfill
  \begin{subfigure}[b]{0.20\textwidth}
    \centering
    \includegraphics[scale=0.5]{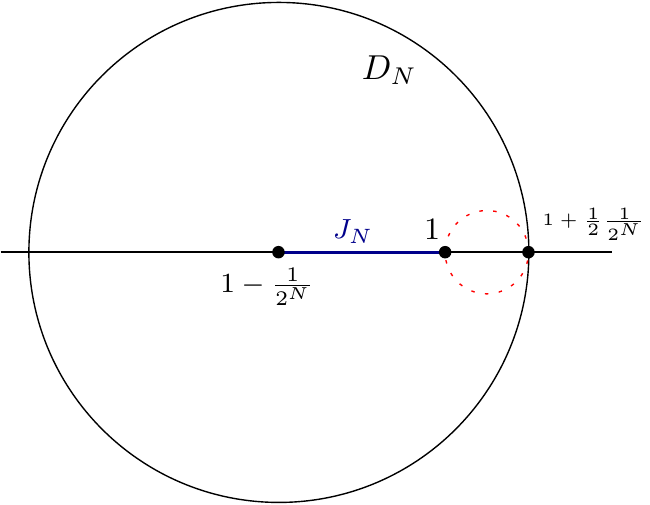}
    \label{fig:covering-last-disc}
  \end{subfigure}
  \label{fig:covering-discs-all}
\caption{Covering discs of the interval $I = (0, 1)$.\\(left) Three covering
discs, $D_n, D_{n+1}, D_{n+2}$.\\(right) The (red) dotted circle is the
auxiliary disc that we  ensure  is contained in $D_{n+1} \setminus D_{n}$.}
\end{figure}

\begin{proof}
We consider the number of steps to isolate the real roots in $I = (-1, 1)$. Let
$N=\ceil{\log d}$ and $\varrho=\varrho(f)$ the number of complex roots in
$\Omega_d$. Recall that $\Omega_d$ is the union of the discs
$D_n:=\bbD(\xi_n,\rho_n):=\xi_n+\rho_n\bbD$, where $\abs{n} \leq N$; see
\eqref{eq:Disc-center} and \eqref{eq:Disc-radius} for the concrete formulas,
and that it contains the interval $I$.

The discs partition $I$ into the $2N+1$ subintervals $J_n := [\xi_n,
\xi_{n+1}]$ (or $J_n := [\xi_n, \xi_{n-1}]$ if $n \leq 0$). Note that $J_n$ is
the union of $3$ intervals of size $1/2^{n+3}$. Because of this, there is a
binary subdivision tree of $I$ of size $\Oh(\lg^2 d)$ such that every of its
intervals is contained in some $J_n$. Thus, if we bound the width of the
subdivision tree of \descartes starting at each $J_n$ by $w$, then the width of
the subdivision tree of \descartes starting at $I$ is bounded by $\Oh(w\lg^2
d+\lg^2 d)$.

We focus on intervals $J_n$ for $n \geq 0$; similar arguments apply for $n \geq
0$. We consider two cases: $n<N$ and $n=N$.

\noindent
\emph{Case $n<N$.} It holds $\wid(J_n) = \rho_n = 3/2^{n+3}$. For each $J_n$,
assume that we perform a number of subdivision steps
to obtain intervals, say $J_{n, \ell}$, with
$\wid(J_{n,\ell}) = 2^{-\ell}$. We
choose $\ell$ so that the corresponding Obreshkoff areas,
$\ObA_{\varrho}(J_{n,\ell})$, are inside $\Omega_d$. In particular, we ensure
that the Obreshkoff areas related to $J_{n,\ell}$ lie in $D_{n+1}$.
  
The diameter of the Obreshkoff discs,
$\uObD_{\varrho}(J_{n,\ell})$ and $\dObD_{\varrho}(J_{n,\ell})$, 
is $\wid(J_{n,\ell})/\sin\tfrac{\pi}{\varrho + 2}$.
For every $\ObA_{\varrho}(J_{n,\ell})$ to be in $D_{n+1}$ and hence inside
$\Omega_d$,
it suffices that a disc with diameter 
$2 \,\wid(J_{n,\ell})/\sin\tfrac{\pi}{\varrho + 2}$,
that has its center in the interval $[\xi_n, \xi_{n+1}]$
and touches the right endpoint of $J_n$, to be
inside $D_{n+1} \setminus D_n$.
This is the worst case scenario: a disc big enough that contains
$\ObA_{\varrho}(J_{n,\ell})$ and lies $D_{n+1}$. This auxiliary disc is the
dotted (red) disc in Fig.~2~(left).
It should be that 
  \[
    2\, \wid(J_{n,\ell})/ {\sin\tfrac{\pi}{\varrho + 2}} \leq 
    2\,\rho_{n+1} = 3/2^{n+3}.
  \]
Taking into account that  $\wid(J_{n,\ell}) = 2^{-\ell}$ and
  \[
    \sin\tfrac{\pi}{\varrho + 2}
    > \sin\tfrac{1}{\varrho}
    \geq {\tfrac{1}{\varrho}} / {\sqrt{1 + \tfrac{1}{\varrho^2}}}
    \geq \tfrac{1}{2 \varrho} ,
    \]
we deduce
$
	2^{-\ell +1} 2 \varrho \leq 3/2^{n+3}
$
and so
$
	\ell \geq \lg \frac{2^{n+5} \varrho}{3}.
$

Hence, $\wid(J_{n, \ell}) = 3/(2^{n+5} \varrho)$
and so $J_n$ is partitioned to at most
$\frac{\wid(J_n)}{\wid(J_{n,\ell})} = 4 \varrho$
(sub)intervals.
So, during the subdivision process, starting from (each) $J_n$, we obtain the
intervals
$J_{n,\ell}$ after performing
 at most $8 \varrho$ subdivision steps (this is the size of the
 complete binary tree starting from $J_n$). 
 To say it differently, the subdivision tree
 that has $J_n$ as its root and the intervals $J_{n, \ell}$ 
as leaves has depth $\ell = \lceil \lg(4 \varrho) \rceil$. The same hold for
$J_{N-1}$ because
$\rho_n \leq \rho_{N}$, for all $0 \leq n \leq N-1$. 

Thus, the width of the tree starting at $J_n$ is at most $\Oh(\varrho^2)$,
because we have $\OO(\varrho)$ subintervals $J_{n,\ell}$
and for each $\var(f, J_{n,\ell}) \leq \varrho$.

\noindent
\emph{Case $n=N$.} 
Now $\wid(J_N) = 3/2^{N+1}$.
We need a slightly different argument to account for the number of subdivision
steps
for the last disc $D_N$. 
To this disc we assign the interval $J_N = [1 - 1/2^N, 1]$ with $\wid(J_N) =
1/2^N$;
see Figure~2.

We need to obtain small enough intervals $J_{N, \ell}$ of width $1/2^{\ell}$
so that corresponding Obreskoff areas,  $\ObA_{\varrho}(J_{N,\ell})$, to be
inside $D_N$.
So, we require that an auxiliary disc of diameter 
$2 \,\wid(J_{N,\ell})/\sin\tfrac{\pi}{\varrho + 2}$,
that has ts center in the interval $[1, 1/2^{N+1}]$
and touches 1 to be
inside $D_{N}$; actually inside $D_N \cap \{x \geq 1\}$; see Figure~2.
And so 
 \[
    2\, \wid(J_{N,\ell}) / {\sin\tfrac{\pi}{\varrho + 2}} \leq 
    \rho_{n+1} = 1/2^{N+1}.
  \]
This leads to $\ell \geq \lg(\varrho \, 2^{N+3})$.
Working as previously, we estimate that the number of subdivisions we perform
to obtain the interval $J_{N, \ell}$ is $8\varrho$.
Also repeating the previous arguments, the width of the tree of \Descartes
starting at $J_N$ is at most $\Oh(\varrho^2)$.

By combining all the previous estimates, we conclude that the subdivision tree
of \descartes has width $\Oh(\varrho^2\lg^2 d+\lg^2 d)$.

To bound the depth of the subdivision tree of \descartes, consider an interval
$J_\ell$ of width $1/2^{\ell}$ obtained after $\ell+1$ subdivisions. By
theorem~\ref{thm:sepdepthbound}, we can guarantee termination if for some
$\varepsilon>0$,
\[
1/2^{\ell - 1}\leq \min\{\Delta_{\varepsilon}\uR(f),\varepsilon\}.
\]
Fix $\varepsilon=1/(\enumber d\condR(f))$. Then, by
Theorem~\ref{theo:condbasedseparation}, it suffices to hold
\[
\ell\geq 1+\lg(12 d\condR(f)).
\]
Hence, the depth of the subdivision tree is at most $\Oh(\lg(d\condR(f)))$.

Therefore, since the subdivision tree of \descartes has width
$\Oh(\varrho^2\log d+\log^2 d)$ and depth $\Oh(\lg(d\condR(f)))$, the size
bound follows. For the bit complexity, by \cite{ESY:descartes}, see also
\cite{KM-newDesc-06,sm-anewdsc,Sagraloff-approxDesc-14,emt-lncs-2006} and
Proposition~\ref{prop:poly-maps},
the worst case cost of each step of \descartes is $\sOB(d \tau + d^2 \delta)$,
where $\delta$ is the logarithm of the highest bitsize that we compute with, or
equivalently the depth of the subdivision tree. In our case, $\delta =
\OO(\lg(d \condR(f))$.
\end{proof}

\subsubsection{Expected complexity estimates}

\begin{theo}
  \label{thm:Descartes-complexity-exp}
Let $\fkf \in \Pd^\bbZ$ be a random bit polynomial with $\tau(\fkf) \geq
\Omega(\lg{d} +u(\fkf))$. Then, using \descartes, the expected number of
subdivision steps to isolate the real roots in $I = (-1, 1)$	is
	\[
	\sO((1+u(\fkf))^3).
	\] 
	The expected bit complexity of \descartes is
  \[
    \sOB(d \, \tau(\fkf) (1+u(\fkf))^3 + d^2(1+u(\fkf))^4 ).
  \]
If $\fkf$ is a uniform random bit polynomial of bitsize $\tau$ and $\tau =
\Omega(\lg{d} +u(\fkf))$,en the expected number of subdivision steps to
isolate the real rin $I = (-1, 1)$	is
	$
	\sO(1)
	$ 
and the expected bit  complexity becomes
  \[
    \sOB(d \tau  + d^2 ).
  \]
\end{theo}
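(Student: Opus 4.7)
The plan is to derive the expected bounds from the instance-based complexity estimate of Theorem~\ref{thm:Descartes-steps}, namely $\sO(\varrho(\fkf)^2 \lg \condR(\fkf))$ for the number of subdivision steps and $\sOB(d\tau(\fkf)\varrho(\fkf)^2 \lg \condR(\fkf) + d^2 \varrho(\fkf)^2 \lg^2 \condR(\fkf))$ for the bit complexity. After taking expectations, the task reduces to controlling joint moments of the two random quantities $\varrho(\fkf)$ (the number of complex roots near $I$) and $\lg \condR(\fkf)$ (the depth-controlling quantity).

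To decouple them, I would apply Cauchy--Schwarz:
\[
\bbE\big[\varrho(\fkf)^2 \lg^k \condR(\fkf)\big] \leq \sqrt{\bbE \varrho(\fkf)^4}\sqrt{\bbE \lg^{2k} \condR(\fkf)}
\]
for $k \in \{1,2\}$. For the root-count factor, Corollary~\ref{cor:probboundroots} with $\ell = 4$ gives $(\bbE \varrho(\fkf)^4)^{1/4} = \sO(1 + u(\fkf))$ under the hypothesis $\tau(\fkf) = \Omega(\lg d + u(\fkf))$. For the condition-number factor, Corollary~\ref{cor:realglobalexpectations} gives $(\bbE [\min\{\ln \condR(\fkf), c\}]^\ell)^{1/\ell} = \sO(1 + u(\fkf))$, provided we fix an a priori deterministic upper bound $c$ on $\ln \condR(\fkf)$ so that $\min\{\ln \condR(\fkf), c\} = \ln \condR(\fkf)$ almost surely. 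For square-free integer polynomials of bitsize $\tau$, classical separation-type bounds yield $\condR(f) \leq 2^{\Oh(d\tau)}$, so $c = \Oh(d\tau)$ works; the resulting condition of the corollary, $\tau(\fkf) \geq 4\ln(ed) + 2u(\fkf) + 2\ell\ln c$, is implied by the theorem's hypothesis because $\ln c = \Oh(\lg d + \lg\tau)$ is absorbed into the polylog factors hidden by $\sO$.

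Combining these two estimates gives $\bbE[\varrho(\fkf)^2 \lg \condR(\fkf)] = \sO((1+u(\fkf))^3)$ and $\bbE[\varrho(\fkf)^2 \lg^2 \condR(\fkf)] = \sO((1+u(\fkf))^4)$, which, substituted back into the instance-based bound, yield the claimed expected number of subdivision steps $\sO((1+u(\fkf))^3)$ and the expected bit complexity $\sOB(d\tau(\fkf)(1+u(\fkf))^3 + d^2(1+u(\fkf))^4)$. The uniform random bit polynomial case is then immediate: by Example~\ref{ex:uniform}, $u(\fkf) = 0$, so the bounds collapse to $\sO(1)$ steps and $\sOB(d\tau + d^2)$ bit complexity.

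The main obstacle, I expect, is the interplay between the square-free reduction and the truncation step. Since $\condR(\fkf) = \infty$ whenever $\fkf$ has a multiple real root in $I$, one must justify that this event does not corrupt the expectation, either by invoking the square-free preprocessing step of \descartes (whose cost of $\sOB(d\tau)$ is absorbed in the final bound) and applying the probabilistic estimates to the square-free part, or by showing that the probability of a random integer polynomial being non-square-free is small enough to be absorbed. Simultaneously, one must ensure that the worst-case bound $c$ is large enough to be a genuine upper bound on $\ln\condR$ of the (square-free part of) $\fkf$ while remaining small enough that the $\ln c$ term fits into the hypothesis of Corollary~\ref{cor:realglobalexpectations}.
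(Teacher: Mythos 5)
Your overall route matches the paper's: invoke the instance-based estimate of Theorem~\ref{thm:Descartes-steps}, decouple $\varrho(\fkf)$ from $\lg\condR(\fkf)$ via Cauchy--Schwarz, and finish with Corollaries~\ref{cor:realglobalexpectations} and~\ref{cor:probboundroots}. Your exponent bookkeeping ($\ell=4$ for the root count, $\ell\in\{2,4\}$ for the condition number) and the resulting powers of $(1+u(\fkf))$ are also correct.

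The gap you flag at the end is real, and the paper has a cleaner way around it than either of your two suggestions. Rather than truncating $\ln\condR(\fkf)$ by a separation-bound-derived worst-case estimate valid only for \emph{square-free} integer polynomials (which then forces you to argue separately about the square-free reduction or bound the probability of a repeated root), the paper truncates by the algorithm's own worst-case bit complexity $\sOB(d^4\tau^2)$ of \descartes from \cite{ESY:descartes}. Concretely, the pointwise bound used is
\[
\cost(\fkf)\;\leq\;\min\Big\{d\,\tau\,\varrho^2\lg\condR + d^2\varrho^2\lg^2\condR,\ d^4\tau^2\Big\}
\;\leq\; d\,\tau\,\varrho^2\min\{\lg\condR, d^3\tau\} + d^2\varrho^2\big(\min\{\lg\condR, d^2\tau^2\}\big)^2,
\]
which is finite even when $\condR(\fkf)=\infty$ (i.e.\ when $\fkf$ is not square-free), precisely because the worst-case algorithmic bound is valid unconditionally. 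This choice of truncation threshold $c$ has $\ln c = \Oh(\lg d + \lg\tau)$, just like yours, so it is absorbed into the hypothesis $\tau(\fkf)=\Omega(\lg d + u(\fkf))$ in exactly the way you describe. The lesson: when a random quantity entering your complexity estimate can be infinite (or merely unbounded) on a bad event, it is cleaner to truncate by what the algorithm actually costs in the worst case than by a structural bound that requires nondegeneracy assumptions you would otherwise have to verify or condition on.

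One cosmetic point: you apply Cauchy--Schwarz to $\bbE[\varrho^2\lg^k\condR]$ before truncating. The paper truncates first, then applies Cauchy--Schwarz to $\bbE\big[\varrho^2\min\{\lg\condR,c\}^k\big]$. Truncating first is preferable since the untruncated expectations may not be finite a priori.

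Finally, a small misstep: the hypothesis $\tau(\fkf)\geq \Omega(\lg d + u(\fkf))$ must be strengthened in passing to $\tau(\fkf)\geq\Omega(\lg d + u(\fkf) + \ln c)$ to meet the condition of Corollary~\ref{cor:realglobalexpectations}; this is harmless because $\ln c = \Oh(\lg d + \lg\tau)$ is dominated, but it should be stated rather than silently absorbed into the $\sO$.
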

\begin{proof}
We only bound the number of bit operations; the bound for the number of steps
is analogous. By Theorem~\ref{thm:Descartes-steps} and the worst-case bound
$\sOB(d^4 \tau^2)$ for \descartes \cite{ESY:descartes}, the bit complexity of
\descartes at $\fkf$ is at most
\[
\sOB\left(\min\{d \tau(\fkf) \varrho(\fkf)^2 \lg \condR(\fkf)+ d^2
\varrho(\fkf)^2 \lg^2\condR(\fkf)),d^4\tau(\fkf)^2\}\right),
\]
that in turn we can bound by 
\begin{multline*}
\sOB\left(d \tau(\fkf)\varrho(\fkf)^2\min\{\lg
\condR(\fkf),d^3\tau(\fkf)\}\right.\\\left.+  d^2 \varrho(\fkf)^2\min \{
\lg\condR(\fkf),d^2\tau(\fkf)^2 ) \}\}\right).
\end{multline*}
Now, we take expectations, and, by linearity, we only need to bound
\[
\bbE\,\varrho(\fkf)^2 \min\{\lg \condR(\fkf),d^3\tau(\fkf)\}\text{ and
}\bbE\,\varrho(\fkf)^2 \left(\min \{\lg\condR(\fkf),d^2\tau(\fkf)^2\}\right)^2 .
\]
Let us show how to bound the first, because the second one is the same. By the
Cauchy-Bunyakovsky-Schwarz inequality, $\bbE\,\varrho(\fkf)^2 \min\{\lg
\condR(\fkf),d^3\tau(\fkf)\}$ is bounded by
\[
\sqrt{\bbE\,\varrho(\fkf)^4}\sqrt{\bbE\,\left(\min\{\lg
\condR(\fkf),d^3\tau(\fkf)\}\right)^2}.
\]
Finally, Corollaries \ref{cor:realglobalexpectations} and
\ref{cor:probboundroots} give the estimate.
Note that $\tau(\fkf) \geq \Omega(\lg{d} +u(\fkf))$ implies $\tau(\fkf) \geq
\Omega(\lg{d} +u(\fkf)+\ln c)$ (for the worst-case separation bound $c$
\cite{Dav:TR:85}) so we can apply Corollary~\ref{cor:realglobalexpectations}.
\end{proof}

\newpage
\bibliographystyle{abbrv}
\bibliography{biblio.bib,biblio2.bib}

\begin{thebibliography}{10}

\bibitem{AleGalu-vincent-98}
A.~Alesina and M.~Galuzzi.
\newblock {A new proof of Vincent's theorem}.
\newblock {\em L'Enseignement Math\'ematique}, 44:219--256, 1998.

\bibitem{arorabarak2009}
S.~Arora and B.~Barak.
\newblock {\em Computational complexity: a modern approach}.
\newblock Cambridge University Press, Cambridge, 2009.

\bibitem{becker2018near}
R.~Becker, M.~Sagraloff, V.~Sharma, and C.~Yap.
\newblock {A near-optimal subdivision algorithm for complex root isolation
  based on the Pellet test and Newton iteration}.
\newblock {\em Journal of Symbolic Computation}, 86:51--96, 2018.

\bibitem{mpsolve-theory}
D.~A. Bini and G.~Fiorentino.
\newblock Design, analysis, and implementation of a multiprecision polynomial
  rootfinder.
\newblock {\em Numerical Algorithms}, 23(2):127--173, 2000.

\bibitem{bcssbook}
L.~Blum, F.~Cucker, M.~Shub, and S.~Smale.
\newblock {\em Complexity and real computation}.
\newblock Springer-Verlag, New York, 1998.
\newblock With a foreword by Richard M. Karp.

\bibitem{conditionbook}
P.~B\"{u}rgisser and F.~Cucker.
\newblock {\em Condition: The geometry of numerical algorithms}, volume 349 of
  {\em Grundlehren der Mathematischen Wissenschaften [Fundamental Principles of
  Mathematical Sciences]}.
\newblock Springer, Heidelberg, 2013.

\bibitem{BurrKrahmer-eval-12}
M.~A. Burr and F.~Krahmer.
\newblock {SqFreeEVAL}: An (almost) optimal real-root isolation algorithm.
\newblock {\em Journal of Symbolic Computation}, 47(2):153--166, 2012.

\bibitem{castromontanapardosanmartin2002}
D.~Castro, J.~L. Monta\~{n}a, L.~M. Pardo, and J.~San~Mart\'{\i}n.
\newblock The distribution of condition numbers of rational data of bounded bit
  length.
\newblock {\em Found. Comput. Math.}, 2(1):1--52, 2002.

\bibitem{Dav:TR:85}
J.~H. Davenport.
\newblock Cylindrical algebraic decomposition.
\newblock Technical Report 88--10, School of Mathematical Sciences, University
  of Bath, England, http://www.bath.ac.uk/masjhd/, 1988.

\bibitem{dedieubook}
J.-P. Dedieu.
\newblock {\em Points fixes, z\'{e}ros et la m\'{e}thode de {N}ewton},
  volume~54 of {\em Math\'{e}matiques \& Applications (Berlin) [Mathematics \&
  Applications]}.
\newblock Springer, Berlin, 2006.

\bibitem{downeyfellows2013}
R.~G. Downey and M.~R. Fellows.
\newblock {\em Fundamentals of parameterized complexity}.
\newblock Texts in Computer Science. Springer, London, 2013.

\bibitem{Yap:SturmBound:05}
Z.~Du, V.~Sharma, and C.~K. Yap.
\newblock Amortized bound for root isolation via {S}turm sequences.
\newblock In D.~Wang and L.~Zhi, editors, {\em Int. Workshop on {S}ymbolic
  {N}umeric {C}omputing}, pages 113--129, Beijing, China, 2005. Birkhauser.

\bibitem{ekkmsw-bitstream-05}
A.~Eigenwillig, L.~Kettner, W.~Krandick, K.~Mehlhorn, S.~Schmitt, and
  N.~Wolpert.
\newblock A descartes algorithm for polynomials with bit-stream coefficients.
\newblock In {\em International Workshop on Computer Algebra in Scientific
  Computing (CASC)}, pages 138--149. Springer, 2005.

\bibitem{ESY:descartes}
A.~Eigenwillig, V.~Sharma, and C.~K. Yap.
\newblock {Almost tight recursion tree bounds for the Descartes method}.
\newblock In {\em Proc.\ Annual ACM ISSAC}, pages 71--78, New York, USA, 2006.

\bibitem{emt-dmm-j-19}
I.~Emiris, B.~Mourrain, and E.~Tsigaridas.
\newblock Separation bounds for polynomial systems.
\newblock {\em Journal of Symbolic Computation}, July 2019.

\bibitem{egt-issac-2010}
I.~Z. Emiris, A.~Galligo, and E.~P. Tsigaridas.
\newblock Random polynomials and expected complexity of bisection methods for
  real solving.
\newblock In S.~Watt, editor, {\em Proc.\ 35th\ ACM Int'l\ Symp.\ on Symbolic
  \& Algebraic Comp. (ISSAC)}, pages 235--242, Munich, Germany, July 2010. ACM.

\bibitem{emt-lncs-2006}
I.~Z. Emiris, B.~Mourrain, and E.~P. Tsigaridas.
\newblock {Real Algebraic Numbers: Complexity Analysis and Experimentation}.
\newblock In P.~Hertling, C.~Hoffmann, W.~Luther, and N.~Revol, editors, {\em
  {Reliable Implementations of Real Number Algorithms: Theory and Practice}},
  volume 5045 of {\em LNCS}, pages 57--82, 2008.

\bibitem{ept-crc-2012}
I.~Z. Emiris, V.~Y. Pan, and E.~P. Tsigaridas.
\newblock Algebraic algorithms.
\newblock In T.~Gonzalez, editor, {\em Computing Handbook Set - Computer
  Science}, volume~I, chapter~10. CRC Press Inc., Boca Raton, Florida, 3nd
  edition, 2012.

\bibitem{ep-dmm-17}
P.~Escorcielo and D.~Perrucci.
\newblock {On the Davenport--Mahler bound}.
\newblock {\em Journal of Complexity}, 41:72--81, 2017.

\bibitem{eigensolve}
S.~Fortune.
\newblock An iterated eigenvalue algorithm for approximating roots of
  univariate polynomials.
\newblock {\em Journal of Symbolic Computation}, 33(5):627--646, 2002.

\bibitem{htzekm-solve-09}
M.~Hemmer, E.~P. Tsigaridas, Z.~Zafeirakopoulos, I.~Z. Emiris, M.~I. Karavelas,
  and B.~Mourrain.
\newblock Experimental evaluation and cross-benchmarking of univariate real
  solvers.
\newblock In {\em Proc. ACM Conference on Symbolic Numeric Computation (SNC)},
  pages 45--54, 2009.

\bibitem{Higham-book-02}
N.~J. Higham.
\newblock {\em Accuracy and stability of numerical algorithms}.
\newblock SIAM, 2002.

\bibitem{JohKraLynRicRus-issac-06}
J.~R. Johnson, W.~Krandick, K.~Lynch, D.~Richardson, and A.~Ruslanov.
\newblock {High-performance implementations of the Descartes method}.
\newblock In {\em Proc.\ Annual ACM ISSAC}, pages 154--161, NY, 2006.

\bibitem{Kirrinis-solve-98}
P.~Kirrinnis.
\newblock {Partial fraction decomposition in $\mathbb{C}(z)$ and simultaneous
  Newton iteration for factorization in $\mathbb{C}[z]$}.
\newblock {\em Journal of Complexity}, 14(3):378--444, 1998.

\bibitem{krs-for-real-16}
A.~Kobel, F.~Rouillier, and M.~Sagraloff.
\newblock Computing real roots of real polynomials... and now for real!
\newblock In {\em Proc. ACM International Symposium on Symbolic and Algebraic
  Computation (ISSAC)}, pages 303--310, 2016.

\bibitem{KM-newDesc-06}
W.~Krandick and K.~Mehlhorn.
\newblock New bounds for the descartes method.
\newblock {\em Journal of Symbolic Computation}, 41(1):49--66, 2006.

\bibitem{livshytspaourispivovarov2016}
G.~Livshyts, G.~Paouris, and P.~Pivovarov.
\newblock On sharp bounds for marginal densities of product measures.
\newblock {\em Israel Journal of Mathematics}, 216(2):877–889, 2016.

\bibitem{McP-book-13}
J.~M. McNamee and V.~Pan.
\newblock {\em Numerical Methods for Roots of Polynomials-Part II}.
\newblock Newnes, 2013.

\bibitem{msw-apan-15}
K.~Mehlhorn, M.~Sagraloff, and P.~Wang.
\newblock From approximate factorization to root isolation with application to
  cylindrical algebraic decomposition.
\newblock {\em Journal of Symbolic Computation}, 66:34--69, 2015.

\bibitem{moroz2021}
G.~Moroz.
\newblock New data structure for univariate polynomial approximation and
  applications to root isolation, numerical multipoint evaluation, and other
  problems, 6 2021.
\newblock arXiv:2106.02505.

\bibitem{Obreshkoff-book}
N.~Obreshkoff.
\newblock {\em Zeros of polynomials}.
\newblock Marin Drinov Academic Publishing House, 2003.
\newblock Translation from the Bulgarian.

\bibitem{Pan-survey-97}
V.~Y. Pan.
\newblock Solving a polynomial equation: some history and recent progress.
\newblock {\em SIAM review}, 39(2):187--220, 1997.

\bibitem{Pan-Weyl-00}
V.~Y. Pan.
\newblock Approximating complex polynomial zeros: modified {Weyl}'s quadtree
  construction and improved newton's iteration.
\newblock {\em Journal of Complexity}, 16(1):213--264, 2000.

\bibitem{Pan02jsc}
V.~Y. Pan.
\newblock Univariate polynomials: Nearly optimal algorithms for numerical
  factorization and rootfinding.
\newblock {\em J.~Symbolic Computation}, 33(5):701--733, 2002.

\bibitem{Pan-survey-21}
V.~Y. Pan.
\newblock New {Progress} in {Polynomial} {Root}-finding.
\newblock {\em arXiv:1805.12042 [cs]}, Feb. 2021.
\newblock arXiv: 1805.12042.

\bibitem{PanTsi-refine-2013}
V.~Y. Pan and E.~P. Tsigaridas.
\newblock On the boolean complexity of real root refinement.
\newblock In {\em Proc. of the 38th International Symposium on Symbolic and
  Algebraic Computation (ISSAC)}. {ACM} Press, 2013.

\bibitem{roughgarden2021book}
T.~Roughgarden.
\newblock {\em Beyond the Worst-Case Analysis of Algorithms}.
\newblock Cambridge University Press, 2021.

\bibitem{RouZim:solve:03}
F.~Rouillier and Z.~Zimmermann.
\newblock Efficient isolation of polynomial's real roots.
\newblock {\em J. Comput. \& Applied Math.}, 162(1):33--50, 2004.

\bibitem{rudelsonvershynin2008}
M.~Rudelson and R.~Vershynin.
\newblock The {L}ittlewood-{O}fford problem and invertibility of random
  matrices.
\newblock {\em Adv. Math.}, 218(2):600--633, 2008.

\bibitem{rudelsonvershynin2015}
M.~Rudelson and R.~Vershynin.
\newblock Small ball probabilities for linear images of high-dimensional
  distributions.
\newblock {\em Int. Math. Res. Not. IMRN}, 19:9594--9617, 2015.

\bibitem{Sagraloff-approxDesc-14}
M.~Sagraloff.
\newblock On the complexity of the {Descartes} method when using approximate
  arithmetic.
\newblock {\em Journal of Symbolic Computation}, 65:79--110, 2014.

\bibitem{sm-anewdsc}
M.~Sagraloff and K.~Mehlhorn.
\newblock Computing real roots of real polynomials.
\newblock {\em Journal of Symbolic Computation}, 73:46--86, 2016.

\bibitem{Sch82}
A.~Sch{\"o}nhage.
\newblock The fundamental theorem of algebra in terms of computational
  complexity.
\newblock Manuscript. Univ.\ of T{\"{u}}bingen, Germany, 1982.

\bibitem{sharma-tcs-2008}
V.~Sharma.
\newblock Complexity of real root isolation using continued fractions.
\newblock {\em Theoretical Computer Science}, 409(2):292--310, 2008.

\bibitem{PARI2}
{The PARI~Group}, Univ. Bordeaux.
\newblock {\em {PARI/GP version \texttt{2.11.2}}}, 2019.
\newblock available from \url{http://pari.math.u-bordeaux.fr/}.

\bibitem{titchmarsh1939}
E.~C. Titchmarsh.
\newblock {\em The theory of functions}.
\newblock Oxford University Press, Oxford, second edition, 1939.

\bibitem{TCTcubeI}
J.~Tonelli-Cueto and E.~Tsigaridas.
\newblock {Condition Numbers for the Cube. I: Univariate Polynomials and
  Hypersurfaces}.
\newblock In {\em Proceedings of the 45th International Symposium on Symbolic
  and Algebraic Computation}, ISSAC '20, page 434–441, New York, NY, USA,
  2020. Association for Computing Machinery.

\bibitem{TCTcubeI-journal}
J.~Tonelli-Cueto and E.~Tsigaridas.
\newblock {Condition Numbers for the Cube. I: Univariate Polynomials and
  Hypersurfaces}, 2021.
\newblock To appear in the special issue of the Journal of Symbolic Computation
  for ISSAC 2020. Available at arXiv:2006.04423.

\bibitem{t-slv-16}
E.~Tsigaridas.
\newblock {SLV: a software for real root isolation}.
\newblock {\em ACM Communications in Computer Algebra}, 50(3):117--120, 2016.

\bibitem{et-tcs-2007}
E.~P. Tsigaridas and I.~Z. Emiris.
\newblock {On the complexity of real root isolation using Continued Fractions}.
\newblock {\em Theoretical Computer Science}, 392:158--173, 2008.

\bibitem{turing1948}
A.~M. Turing.
\newblock Rounding-off errors in matrix processes.
\newblock {\em Quart. J. Mech. Appl. Math.}, 1:287--308, 1948.

\bibitem{vzGGer}
J.~von~zur Gathen and J.~Gerhard.
\newblock {\em Modern Computer Algebra}.
\newblock Cambridge Univ. Press, Cambridge, U.K., 2nd edition, 2003.

\bibitem{wilkinson1971}
J.~H. Wilkinson.
\newblock Some comments from a numerical analyst.
\newblock {\em J. Assoc. Comput. Mach.}, 18:137--147, 1971.

\bibitem{sy-simple-11}
C.~K. Yap and M.~Sagraloff.
\newblock A simple but exact and efficient algorithm for complex root
  isolation.
\newblock In {\em Proc. 36th International Symposium on Symbolic and Algebraic
  Computation (ISSAC)}, pages 353--360, 2011.

\end{thebibliography}

\end{document}